\documentclass[copyright,creativecommons]{eptcs}
\usepackage{underscore}   

\usepackage{amsmath,amsthm,amssymb}
\usepackage{environ}
\usepackage{stmaryrd}
\usepackage{mathrsfs}
\usepackage{url}
\usepackage{cite}
\usepackage{pifont,xcolor,colortbl}
\usepackage{mleftright}
\usepackage{wrapfig}
\usepackage{cutwin}
\usepackage{floatflt}
\usepackage{wrapfig}
\usepackage{mdwlist}
\usepackage{xspace}
\usepackage{mathtools}

\hypersetup{bookmarks=false}

\usepackage{tikz}
\usetikzlibrary{arrows,automata,positioning,calc}
\usetikzlibrary{shapes,snakes}
\usetikzlibrary{decorations.pathreplacing}
\usepackage[colorinlistoftodos,prependcaption]{todonotes}

\definecolor{Gray}{gray}{0.80}

\DeclareMathAlphabet{\mymathbb}{U}{BOONDOX-ds}{m}{n}

\usepackage{macros}

\title{Synthesis in Presence of Dynamic Links}
\author{B\'eatrice B\'erard$^{1}$,
Benedikt Bollig$^{2}$,
Patricia Bouyer$^{2}$,\\
Matthias F\"ugger$^{2,3}$, and
Nathalie Sznajder$^{1}$\\[1ex]
$^{1}$Sorbonne Universit{\'e}, CNRS, LIP6, F-75005 Paris, France\\
$^{2}$CNRS \& LSV, ENS Paris-Saclay, Universit\'e Paris-Saclay, France\quad $^{3}$Inria, France
}

\begin{document}
\maketitle

\theoremstyle{plain}
\newtheorem{theorem}{Theorem}
\newtheorem{proposition}{Proposition}
\newtheorem{lemma}{Lemma}
\newtheorem{corollary}{Corollary}
\newtheorem{fact}{Fact}
\theoremstyle{definition}
\newtheorem{definition}{Definition}
\newtheorem{example}{Example}
\theoremstyle{remark}
\newtheorem{claim}{Claim}
\newtheorem{remark}{Remark}

\begin{abstract}
The problem of distributed synthesis is to automatically generate
  a distributed algorithm, given a target communication network
  and a specification of the algorithm's correct behavior.

Previous work has focused on static networks with an a priori fixed
  message size.
This approach has two shortcomings:
Recent work in distributed computing is shifting towards dynamically changing
  communication networks rather than static ones, and an important class
  of distributed algorithms are so-called full-information protocols,
  where nodes piggy-pack previously received messages onto current messages.

In this work, we consider the synthesis problem for a system of two nodes
  communicating in rounds over a dynamic link whose message size is not
  bounded.
Given a network model, i.e., a set of link directions, in each round 
  of the execution, the adversary choses a link from the network model,
  restricted only by the specification, and delivers messages according to the
  current link's directions.
Motivated by communication buses with direct acknowledge mechanisms,
  we further assume that nodes are aware of which messages have been
  delivered.

We show that the synthesis problem is decidable for a network model
  if and only if it does not contain the empty link that dismisses
  both nodes' messages.
\end{abstract}


\section{Introduction}

Starting from Church's work \cite{church1957applications} on synthesizing
  circuits from arithmetic specifications in the 1960s,
  automatic synthesis of programs or circuits
  has been widely studied.

In the case of a reactive system, given a specification, the goal is to find an
  implementation for a system that repeatedly receives inputs from the environment
  and generates outputs such that the system's behavior adheres to the specification.
Early work \cite{rabin1972automata,pnueli1988framework,pnueli1989synthesis} was synthesizing
  algorithms that require knowledge of the complete system state,
  inherently yielding single-process solutions.

Single-process synthesis is related to finding a strategy for a player representing the process
  that has to win against the adversarial environment, and has been studied in the context of
  games~\cite{buchi1990solving,abadi1989realizable,thomas1995synthesis} as well as with automata
  techniques \cite{pnueli1989synthesis,kupferman1999church}.

For systems with more than one process, different models for how
  communication and computation is organized have been studied.
Their two extremes are message-triggered asynchronous computation \cite{gastin2004distributed,MadhusudanTY05}
  and round-wise synchronous computation.

An example for the latter is the work by Pnueli and Rosner~\cite{pnueli1990distributed},
  who considered synchronous distributed systems with an a priori fixed communication
  network.
In their model, the network is given by a directed communication graph,
  whose nodes are the processes and with a link from process $p$ to $q$ if
  $p$ can send messages to $q$ (or write to and read from a shared
  variable).
Messages are from a fixed, finite alphabet per link.
A solution to the synthesis problem is a distributed algorithm that operates in
  rounds, repeatedly reading inputs, exchanging messages, and
  setting outputs.
Already the case of two processes with separate inputs and outputs, and
  without a communication link to each other, was shown to be undecidable
  for linear temporal logic (LTL) specifications~\cite{pnueli1981temporal} on
  the inputs and outputs.
As a positive result, the paper presents a solution for unidirectional process
  chains.

Still in the case of static architectures and bounded messages,
Kupferman and Vardi~\cite{kupferman2000synthesis,kupferman2001synthesizing} extended
  decidability results to branching time specifications and proved sufficient
  conditions on communication networks for decidability,
  while Finkbeiner and Schewe~\cite{finkbeiner2005uniform}
  presented a characterization of networks where synthesis is decidable.
Since specifications are allowed to talk about messages, however, they are powerful enough to break existing communication links between processes, leading to undecidability like in the two-process system without communication~\cite{pnueli1990distributed}.
Gastin \emph{et al.} \cite{gastin2009distributed} proved a necessary and sufficient condition 
  for decidability on a class of communication networks if specifications are only on inputs and
  outputs. Like \cite{gastin2009distributed}, our work only allows ``input-output'' specifications, so that we obtain decidability in several cases where the framework of \cite{finkbeiner2005uniform} does not allow it.

Like in the single-process scenario, synthesis in distributed systems can be modeled
  as a game, which, in this context, are partial information games played between a cooperating
  set of processes against the environment~\cite{peterson1979multiple,mohalik2003distributed,van2005synthesis,berwanger2018hierarchical}. 
With the exception of  \cite{berwanger2018hierarchical},
  all the above approaches assume static, reliable networks.
  In \cite{berwanger2018hierarchical}, Berwanger \emph{et al.}\ study games in which information
  that players have about histories is hierarchically ordered, and this order may change dynamically
  during a play. The main difference to our work is that we consider a memory model
  where messages carry the complete \emph{causal} history
  allowing for unbounded communication messages, while 
\cite{berwanger2018hierarchical} is based on local observations so that,
at every round, a bounded amount of information is transmitted between players.
Further, while asynchronous solutions to the synthesis problem considered
  potentially unbounded messages \cite{MadhusudanTY05,gastin2004distributed},
  previous synchronous solutions
  assume an a priori fixed message size.
Also \cite{MadhusudanTY05} assume that processes that communicate infinitely often
  encounter each other within a bounded number of steps.

The above assumptions have two shortcomings:

\medskip
\noindent \emph{Modeling unreliability.}
Distributed computing has a long history of studying algorithms that provide
  services in presence of unstable or unreliable
  components~\cite{lynch1996distributed}.
Indeed, classical process and link failures can be treated as particular
  dynamic network behavior~\cite{charron2009heard}.
Early work by Akkoyunlu \emph{et al.} \cite{akkoyunlu1975some} considered
  the problem of two groups of gangsters coordinating a coup despite
  an unreliable channel between both parties;
  later on generalized to the Byzantine generals
  problem~\cite{lamport1982byzantine}.
Protocols like the Alternating Bit Protocol~\cite{bartlett1969note}
  aim at tolerating message loss between a sender and receiver node,
  and \cite{aho1982bounds} studies optimal transmission rates over
  unreliable links.
Afek \emph{et al.}~\cite{afek1994reliable} discuss protocols that
  implement reliable links on top of unreliable links.
Further, for algorithms that have to operate in dynamic networks,
  see, e.g., \cite{kuhn2010distributed,coulouma2013characterization,CFN15:icalp},
  network changes are the normal case rather than the
  exception.

Synthesis with unstable or faulty components has been studied by
  Velner and Rabinovich \cite{velner2011church} for two player games in presence
  of information loss between the environment and the inputs of a process.
The approach is restricted to a single process, however.
Dimitrova and Finkbeiner \cite{dimitrova2009synthesis} study synthesis of fault-tolerant
  distributed algorithms in synchronous, fully connected networks.
Processes are partitioned into correct and faulty.
It is assumed that
  at every round at least one process is correct and
  the output of a correct process must not depend on the local inputs of
  faulty processes.
While unreliable links can be mapped to process failures,
  the above assumptions are a priori too restrictive to cover
  dynamic networks.

\medskip

\noindent \emph{Modeling full-information protocols.}
An important class of distributed algorithms are full-information
  protocols, where nodes piggy-pack previously received messages onto
  current messages~\cite{lynch1996distributed,fagin2003reasoning}.
By construction, such algorithms do not have bounded message size.
This kind of causal memory has been considered in
\cite{gastin2004distributed,MadhusudanTY05,GenestGMW13,gimbert18} for synthesis
and control of Zielonka automata over Mazurkiewicz traces with
various objectives, ranging from local-state reachability to $\omega$-branching
behaviors. Zielonka automata usually model
\emph{asynchronous processes} (there is no global clock so that processes
 evolve at their own speed until they synchronize) and
\emph{symmetric communication} (whenever processes synchronize, they mutually exchange their complete
history).

\medskip

In this work we consider the synthesis problem for a system of
  two nodes communicating in synchronous rounds, where
  specifications are given as LTL formulas or, more generally,
  $\omega$-regular languages.
The nodes are connected via a dynamic link.
As in \cite{coulouma2013characterization,CFN15:icalp}, a network is
  a set of communication graphs, called \emph{network model}.
A distributed algorithm operates in rounds as in \cite{pnueli1990distributed}, 
  with the difference that the communication graph is chosen by
  an adversary per round.
Motivated by communication buses, like the industry standard
  I${}^2$C bus \cite{I2C} and CAN bus \cite{CAN},
  with direct acknowledge mechanisms after message transfers,
  we assume that nodes are aware if messages have been delivered
  successfully. In contrast to the Pnueli-Rosner setting, we suppose
  full-information protocols where processes have access to their
  causal history. That is, the dynamic links have unbounded message size.
  Unlike in Zielonka automata over traces, however,
  we consider \emph{synchronous processes} and potentially \emph{asymmetric communication}.
  In particular, the latter implies that a process may learn all about the
  other's history without revealing its own.
  Observe that, when restricting to Zielonka automata, synthesis of asynchronous
  distributed systems is \emph{not} a generalization of the synchronous case.

We show that the synthesis problem is decidable for a network model if and only if
  it does not contain the empty link that dismisses both nodes' messages.
As we assume that LTL specifications can not only reason about inputs and outputs,
but also about the communication graph, our result covers synthesis for dynamic systems
  where links change in more restricted ways.
In particular, this includes processes that do not send further messages after their message
  has been lost, bounded interval omission faults, etc.

\medskip

\noindent \emph{Outline.}
  We define the
  synthesis problem for the dynamic two-process model in
  Section~\ref{sec:problem}.  In Section~\ref{sec:onedirectional-model},
  we discuss the asymmetric model where communication to
  process $1$ never fails. Central to the analysis
  is to show that, despite the availability of unbounded communication
  links, finite-memory distributed algorithms actually suffice.  We
  then prove that the synthesis problem is decidable
  (Theorem~\ref{thm:decidable-asymm}).  In Section~\ref{sec:reduction}
  we reduce the general case of dynamic communication to the
  asymmetric case, obtaining our main result of decidability in
  network models that do not contain the empty link
  (Theorem~\ref{thm:decidable}).  We conclude in
  Section~\ref{sec:conclusion}.
Missing proofs can be found in the long version of the paper
\cite{abs-2002-07545}.


\section{The Synthesis Problem}
\label{sec:problem}

We start with a few preliminaries.
Let $\N = \{0,1,2,\ldots\}$.
For a (possibly infinite) alphabet $A$, the set of finite words over $A$ is denoted by $A^\ast$,
the set of nonempty finite words by $A^+$,
and the set of countably infinite words by $A^\omega$.
We let $\epsilon$ be the empty word and denote
the concatenation of $w_1 \in A^\ast$ and $w_2 \in A^\ast \cup A^\omega$ by $w_1 \cdot w_2$ or simply
  $w_1w_2$.

\smallskip

Fix the set of processes
 \ifdefined\Nprocesses
$\Procs = \{\pone,\ptwo, \dots, N\}$.
\else
$\Procs = \{\pone,\ptwo\}$.
\fi
Every process $p \in \Procs$ comes with fixed finite sets $\Inpp{p}$
and $\Outp{p}$ of possible \emph{inputs} and \emph{outputs}, respectively.
We assume there are at least two possible inputs and outputs per
process, i.e., $|\Inpp{p}| \ge 2$ and $|\Outp{p}| \ge 2$.

We consider systems where computation and communication proceed
in rounds.
In round  $\round = 0,1,2,\ldots$, process $p \in \Procs$ receives an input
$x_p^\round \in \Inp_p$ and it produces an output $y_p^\round \in \Outp{p}$.
The decision on $y_p^\round$ depends on the knowledge that process $p$ has about the
execution up to round $\round$.
In addition to all local inputs $x_p^0,\ldots,x_p^\round$,
this knowledge can also include inputs of the other process,
which may be communicated through communication links.

Following Charron-Bost \emph{et al.}~\cite{CFN15:icalp}, we consider a
  dynamic communication topology in terms of a \emph{network model}, i.e.,
  a fixed nonempty set $\Nmodel \subseteq \{\emptynet,\leftnet,\rightnet,\leftrightnet\}$
  of potentially occurring communication graphs.
In round $\round$, a graph ${\dummynet^\round} \in \Nmodel$
is chosen non-deterministically with the following intuitive meaning:
\begin{description}
\item[$\textcolor{black}{\emptynet}$] No communication takes place.
The knowledge of process $p$ that determines $y_p^\round$
only includes the knowledge at round $\round-1$ as well as the new input $x_p^\round$.

\item[$\textcolor{black}{\leftnet}$] Process 1 becomes aware
of the whole input sequence $x_2^0 \ldots x_2^\round$
that process 2 has received so far. This includes $x_2^\round$, which is transmitted without delay.
The case $\rightnet$ is analogous.

\item[$\textcolor{black}{\leftrightnet}$] Both processes become aware
of the whole input sequence of the other process.
\end{description}
As discussed in the introduction, the knowledge of process $p$ at
round $\round$ also includes the communication link ${\dummynet^\round}$ at $r$,
  which is therefore common knowledge.

\subsection{Histories and Views}

Let us be more formal.
Recall that we fixed the sets $\Procs$, $\Inpp{p}$, $\Outp{p}$, and $\Nmodel$.
We let $\Signals = \Inpp{1} \times \Nmodel \times \Inpp{2}$ be the set of \emph{input signals}.
For ease of notation, we write $\inpsymb{x_1}{\dummynet}{x_2}$ instead of $(x_1,\dummynetw,x_2) \in \Signals$.
Moreover, for ${\dummynet} \in \Nmodel$, we let $\Sigma_\dummynet = \Inpp{1} \times \{\dummynet\} \times \Inpp{2}$.
A word $w \in \Signals^\ast$ represents a possible \emph{history}, a sequence of signals to which the system has been exposed so far.
For a process $p$, we inductively define the \emph{view} $\viewofseq{p}{w}$ of $p$ on $w$
by replacing inputs that are invisible to $p$ by the symbol $\bot$ (we suppose $\bot \not\in \Inpp{1} \cup \Inpp{2}$).
First of all, let $\viewofseq{1}{\epsilon} = \viewofseq{2}{\epsilon} = \epsilon$.
Moreover, for $u \in \Signals^\ast$:
\begin{align*}
\viewofseq{1}{u\inpsymb{x_1}{\leftrightnet}{x_2}} &~=~ u\inpsymb{x_1}{\leftrightnet}{x_2} &
\viewofseq{2}{u\inpsymb{x_1}{\leftrightnet}{x_2}} &~=~ u\inpsymb{x_1}{\leftrightnet}{x_2}
\\
\viewofseq{1}{u\inpsymb{x_1}{\leftnet}{x_2}} &~=~ u\inpsymb{x_1}{\leftnet}{x_2} &
\viewofseq{2}{u\inpsymb{x_1}{\rightnet}{x_2}} &~=~ u\inpsymb{x_1}{\rightnet}{x_2}
\\
\viewofseq{1}{u\inpsymb{x_1}{\rightnet}{x_2}} &~=~ \viewofseq{1}{u}\inpsymb{x_1}{\rightnet}{\bot} &
\viewofseq{2}{u\inpsymb{x_1}{\leftnet}{x_2}} &~=~ \viewofseq{2}{u}\inpsymb{\bot}{\leftnet}{x_2}
\\
\viewofseq{1}{u\inpsymb{x_1}{\emptynet}{x_2}} &~=~ \viewofseq{1}{u}\inpsymb{x_1}{\emptynet}{\bot} &
\viewofseq{2}{u\inpsymb{x_1}{\emptynet}{x_2}} &~=~ \viewofseq{2}{u}\inpsymb{\bot}{\emptynet}{x_2}
\end{align*}
With this, we let $\Hist{1} = \{\viewofseq{1}{w} \mid w \in \Signals^+\}$ and
  $\Hist{2} = \{\viewofseq{2}{w} \mid w \in \Signals^+\}$ be the sets of possible \emph{views} of processes 1 and 2.

The view $\viewofseq{1}{w}$ is illustrated in Figure~\ref{fig:views} for three different words $w$.
For the history in the middle, we have 
$\viewofseq{1}{\inpsymb{x_1^0}{\leftnet}{x_2^0}
\inpsymb{x_1^1}{\leftnet}{x_2^1}
\inpsymb{x_1^2}{\rightnet}{x_2^2}
\inpsymb{x_1^3}{\rightnet}{x_2^3}
} =
\inpsymb{x_1^0}{\leftnet}{x_2^0}
\inpsymb{x_1^1}{\leftnet}{x_2^1}
\inpsymb{x_1^2}{\rightnet}{\bot}
\inpsymb{x_1^3}{\rightnet}{\bot}
$.

\setlength\arrayrulewidth{0.5pt}

\begin{figure}
\centering
{\normalsize
\begin{tabular}{ccc}
$\begin{array}{|cc|c|}
\hline
\cellcolor{Gray}x_1^0 & \cellcolor{Gray}\emptynet & x_2^0\\
\cellcolor{Gray}x_1^1 & \cellcolor{Gray}\emptynet & x_2^1\\
\cellcolor{Gray}x_1^2 & \cellcolor{Gray}\emptynet & x_2^2\\
\cellcolor{Gray}x_1^3 & \cellcolor{Gray}\emptynet & x_2^3\\
\hline
\end{array}$
~~&~~
$\begin{array}{|ccc|}
\hline
\cellcolor{Gray}x_1^0 & \cellcolor{Gray}\leftnet & \cellcolor{Gray}x_2^0\\
\cellcolor{Gray}x_1^1 & \cellcolor{Gray}\leftnet & \cellcolor{Gray}x_2^1\\
\cline{3-3}
\cellcolor{Gray}x_1^2 & \multicolumn{1}{c|}{\cellcolor{Gray}\rightnet} & x_2^2\\
\cellcolor{Gray}x_1^3 & \multicolumn{1}{c|}{\cellcolor{Gray}\rightnet} & x_2^3\\
\hline
\end{array}$
~~&~~
$\begin{array}{|ccc|}
\hline
\cellcolor{Gray}x_1^0 & \cellcolor{Gray}\leftnet & \cellcolor{Gray}x_2^0\\
\cellcolor{Gray}x_1^1 & \cellcolor{Gray}\leftrightnet & \cellcolor{Gray}x_2^1\\
\cellcolor{Gray}x_1^2 & \cellcolor{Gray}\rightnet & \cellcolor{Gray}x_2^2\\
\cellcolor{Gray}x_1^3 & \cellcolor{Gray}\leftnet & \cellcolor{Gray}x_2^3\\
\hline
\end{array}$
\end{tabular}}
\caption{$\viewofseq{1}{w}$ for some histories $w$; the white part is unknown in the view,
  and replaced by $\bot$. \label{fig:views}}
\end{figure}

\subsection{Linear-Time Temporal Logic}

\newcommand{\inform}[2]{(\mathit{in}_{#1} = #2)}
\newcommand{\outform}[2]{(\mathit{out}_{#1} = #2)}
\newcommand{\linkform}[1]{(\mathit{link} = {#1})}
\newcommand{\nextform}[1]{\mathsf{X}#1}
\newcommand{\futureform}[1]{\mathsf{F}#1}
\newcommand{\globallyform}[1]{\mathsf{G}#1}
\newcommand{\untilform}[2]{#1\mathsf{U}#2}
\newcommand{\LTL}[1]{\textup{LTL}(#1)}
\newcommand{\exec}{e}
\newcommand{\ttrue}{\mathit{true}}
\newcommand{\suffix}[2]{#1_{\ge #2}}

Let $\Outputs = \Out_1 \times \Out_2$ be the set of \emph{output signals}.
An \emph{execution} is a word from $(\Signals \times \Outputs)^\omega$,
which records, apart from the input signals, the outputs at every round.
A convenient specification language to define
the \emph{valid} system executions is \emph{linear-time temporal logic} (LTL)
interpreted over words from $(\Signals \times \Outputs)^\omega$.
The logic can, therefore, talk about inputs, outputs, and communication links at a given position.
Moreover, it has the usual temporal modalities. Formally, the set $\LTL{\Nmodel}$ of LTL formulas is given
by the grammar
%
%
\begin{align*}
\varphi :: =~ & \inform{p}{x} \mid \outform{p}{y} \mid \linkform{\dummynet} \mid & \text{atomic formulas}\\
& \nextform{\varphi} \mid \futureform{\varphi} \mid \globallyform{\varphi}
\mid \untilform{\varphi}{\varphi} \mid & \text{temporal modalities}\\
& \neg\varphi \mid \varphi \vee \varphi \mid \varphi \wedge \varphi \mid \varphi \Longrightarrow \varphi
\mid \varphi \Longleftrightarrow \varphi & \text{Boolean connectives}
\end{align*}
%
%
where $p \in \Procs$, $x \in \Inpp{p}$, $y \in \Outp{p}$, and ${\dummynet} \in \Nmodel$.
Let $\exec = \alpha_0\alpha_1\alpha_2 \ldots $ be an execution with
$\alpha_i \in \Signals \times \Outputs$ for all $i \in \N$ and
$\alpha_0 = \bigl(\inpsymb{x_1^0}{\dummynet^0}{x_2^0},(y_1^0,y_2^0)\bigr)$.
For $\round \in \N$, let $\suffix{\exec}{\round}$ denote its suffix $\alpha_{\round}\alpha_{\round+1}\alpha_{\round+2} \ldots$, i.e., $\exec = \suffix{\exec}{0}$.
Boolean connectives are interpreted as usual.
Moreover:
\begin{center}
$
{\arraycolsep=1.4pt\def\arraystretch{1.1}
\begin{array}{lclclcl}
\exec \models \inform{p}{x} & \textup{ if } & x_p^0 = x &~~~~~&
\exec \models \nextform{\phi} & \textup{ if } & \suffix{\exec}{1} \models \phi\\
\exec \models \outform{p}{y} & \textup{ if } & y_p^0 = y &~~~~~&
\exec \models \futureform{\phi} & \textup{ if } & 
\exists \round \ge 0: \suffix{\exec}{\round} \models \phi\\
\exec \models \linkform{\dummynet} & \textup{ if } & {\dummynet^0} = {\dummynet} &~~~~~&
\exec \models \globallyform{\phi} & \textup{ if } & 
\forall \round \ge 0: \suffix{\exec}{\round} \models \phi\\
\multicolumn{7}{l}{
\exec \models \untilform{\phi}{\psi} ~\textup{ if }~
\exists \round \ge 0:
\bigl(
\suffix{\exec}{\round} \models \psi
~\wedge~ \forall 0 \le \roundb < \round: \suffix{\exec}{\roundb} \models \phi
\bigr)}
\end{array}}$
\end{center}
Finally, we let
$L(\phi)=\{\exec \in (\Signals \times \Outputs)^\omega \mid \exec \models \phi\}$
be the set of executions that satisfy $\phi$.

\begin{remark}
  In general, the sequence of communication graphs in an execution
    is arbitrary from $\Nmodel^\omega$, modeling a highly dynamic network without any
    restrictions on stability, eventual convergence, etc.
  Note that the specification is allowed to speak about the
    communication links along a history, however, with the possibility to restrict
    the behavior of the dynamic network and impose process behavior to depend
    on the network dynamics.
\end{remark}

\newcommand{\Zero}{\mymathbb{0}}
\newcommand{\One}{\mymathbb{1}}

\begin{example}\label{ex:ltl}
Suppose $\Inpp{1} = \Inpp{2} = \Outp{1} = \Outp{2} = \{\Zero,\One\}$ and $\Nmodel = \{\leftnet,\rightnet\}$.
Consider
\begin{align*}
\phi_1 &=
\globallyform{\bigl(\outform{1}{\One}\;\Longleftrightarrow\;\outform{2}{\One}\bigr)}
\\
\phi_2 &=
\globallyform{\futureform{\bigl(\inform{1}{\One} \wedge \inform{2}{\One}\bigr)}}
~\Longleftrightarrow~
\globallyform{\futureform{\bigl(\outform{1}{\One} \wedge \outform{2}{\One}\bigr)}}\\
\psi &=
\bigl(\globallyform{\futureform{\linkform{\leftnet}}} \wedge
\globallyform{\futureform{\linkform{\rightnet}}}\bigr)
~\Longrightarrow~ \phi_1 \wedge \phi_2\,.
\end{align*}
Formula $\phi_1$ says that, in each round, both processes agree on their output.
Formula $\phi_2$ postulates that both processes simultaneously output $\One$ infinitely often if, and only if,
both inputs are simultaneously $\One$ infinitely often.
Finally, $\psi$ requires $\phi_1$ and $\phi_2$ to hold if both communication links
occur infinitely often.
We will come back to these formulas later to illustrate the synthesis problem.
\exend
\end{example}

\subsection{Synthesis Problem}

A \emph{distributed algorithm} is a pair $\strat = (\strat_1,\strat_2)$
of functions $f_1: \Hist{1} \to \Outp{1}$ and $f_2: \Hist{2} \to \Outp{2}$
that associate with each view an output.
Given
  $\inpseq = \signal_0\signal_1\signal_2 \ldots \in \Signals^\omega$, we define
  the execution
  $\outcome{\profile}{\inpseq} = \bigl(\signal_0,(y_1^0,y_2^0)\bigr)\bigl(\signal_1,(y_1^1,y_2^1)\bigr)\ldots \in (\Signals \times \Outputs)^\omega$
where
$y_p^\round = \stratp{p}(\viewofseq{\proc}{\signal_0 \ldots \signal_\round})$.
For a finite word $w \in \Signals^\ast$, we define
$\outcome{\profile}{w} \in (\Signals \times \Outputs)^\ast$ similarly
(in particular, $\outcome{\profile}{\epsilon} = \epsilon$).

Let $L \subseteq (\Signals \times \Outputs)^\omega$ and $\phi \in \LTL{\Nmodel}$.
We say that $\profile$ \emph{fulfills} $L$
(respectively $\phi$) if,
for all $w \in \Signals^\omega$, we have $\outcome{\profile}{w} \in L$
(respectively $\outcome{\profile}{w} \in L(\phi)$).
Moreover, we say that $L$ (respectively $\phi$) is \emph{realizable} if
there is some distributed algorithm that fulfills $L$ (respectively $\phi$).

We are now ready to define our main decision problem:

\begin{definition}\label{def:synthesis}
For a fixed network model $\Nmodel$
(recall that we also fixed $\Procs$,
$\Inpp{p}$, $\Outp{p}$), the \emph{synthesis problem}
$\Synthesis{\Nmodel}$ is defined as follows:
\begin{description}
\item[Input:] $\varphi \in \LTL{\Nmodel}$
\item[Question:] Is $\varphi$ realizable?
\end{description}
\end{definition}

\begin{example}\label{ex:synthesis}
Consider the formulas $\phi_1,\phi_2,\psi$ from Example~\ref{ex:ltl}
over $\Nmodel = \{\leftnet,\rightnet\}$.
We easily see that $\phi_1$ is realizable by the distributed algorithm where
both processes always output $\One$.
However, $\phi_1 \wedge \phi_2$ is \emph{not} realizable: if the communication
link is always $\leftnet$ (an analogous argument holds for $\rightnet$), process 2 has no information about any of the inputs
of process~1. Thus, it is impossible for the processes to agree on their outputs in every round while respecting $\phi_2$.
\end{example}
\begin{wrapfigure}{r}{4.5cm}
\centering
{\normalsize
$\begin{array}{cc|ccc|c}
\cline{3-5}%
\textup{round} & & \multicolumn{3}{c|}{\textup{signal}} &\\
\cline{3-5}%
0 & \Zero & \Zero & \leftnet & \One & \Zero\\
\cline{3-5}%
1 & \Zero & \One & \rightnet & \Zero & \Zero\\
2 & \Zero & \One & \rightnet & \One & \Zero\\
3 & \Zero & \Zero & \rightnet & \Zero & \Zero\\
\cline{3-5}%
4 & \One & \One & \leftnet & \Zero & \One\\
5 & \Zero & \One & \leftnet & \One & \Zero\\
\cline{3-5}%
6 & \One & \Zero & \rightnet & \Zero & \One\\
\cline{3-5}%
7 & \Zero & \Zero & \leftnet & \One & \Zero\\
8 & \Zero & \One & \leftnet & \Zero & \Zero\\
\cline{3-5}%
9 & \Zero & \One & \rightnet & \One & \Zero\\
\cline{3-5}%
\end{array}$}
\caption{Fulfilling $\psi$\label{fig:distralgo}}
\end{wrapfigure}
Finally, formula $\psi$ is realizable.
We can now assume that both $\leftnet$ and $\rightnet$ occur infinitely often.
A sequence of signals can be divided into maximal finite blocks with identical
communication links as illustrated in Figure~\ref{fig:distralgo} for the prefix of an execution.
The distributed algorithm proceeds as follows. By default, both processes ouput $\Zero$,
with the following exception: at the first position of each block, a process outputs $\One$
if, and only if, the preceding block contains a round where both processes
simultaneously received $\One$. Note that this preceding block is entirely
contained in the view of both processes. The algorithm's outputs are illustrated in Figure~\ref{fig:distralgo}.
At rounds 4 and 6, they are $\One$ because the corresponding preceding blocks
contain an input pair of $\One$'s. As every block has finite size,
satisfaction of $\phi_2$ is guaranteed.
\exend

\medskip

It is well known that the synthesis problem is undecidable if processes are not connected:

\begin{fact}[Pnueli-Rosner]\label{fact:pnueli-rosner}
The problem $\Synthesis{\{ \emptynet \}}$ is undecidable.
\end{fact}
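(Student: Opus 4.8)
The fact asserts that $\Synthesis{\{\emptynet\}}$ is undecidable: when the network model contains only the empty link, the two processes never communicate, so each process $p$ decides its output $y_p^\round$ purely on the basis of its own input history $x_p^0 \ldots x_p^\round$. This is precisely the classical Pnueli-Rosner setting of two non-communicating processes with disjoint inputs and outputs, whose realizability problem against LTL specifications over inputs and outputs is known to be undecidable. The plan is therefore not to prove undecidability from scratch, but to exhibit a faithful correspondence between our model restricted to $\Nmodel = \{\emptynet\}$ and the architecture shown undecidable in \cite{pnueli1990distributed,pnueli1981temporal}.

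**The reduction I would set up.** First I would observe what the view functions collapse to when $\Nmodel = \{\emptynet\}$. By the definition of $\viewofseq{p}{\cdot}$, every signal carries the link $\emptynet$, and the recursive clauses for $\emptynet$ replace the other process's input by $\bot$. Hence $\viewofseq{1}{\signal_0 \ldots \signal_\round}$ retains exactly the sequence $x_1^0 \ldots x_1^\round$ (each paired with $\emptynet$ and $\bot$), and symmetrically for process $2$. Thus a distributed algorithm $(f_1,f_2)$ here is nothing but a pair of functions $f_1 \colon \Inpp{1}^+ \to \Outp{1}$ and $f_2 \colon \Inpp{2}^+ \to \Outp{2}$, i.e.\ two independent input-output strategies with no shared information. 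Since the single link $\emptynet$ is common knowledge and constant, the atomic formula $\linkform{\emptynet}$ is always true and carries no information, so an LTL specification over $\LTL{\{\emptynet\}}$ is, up to trivial rewriting, an arbitrary LTL formula over the inputs and outputs of the two processes. This matches the Pnueli-Rosner two-process architecture exactly.

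**Carrying out the translation.** Given an instance of the Pnueli-Rosner non-communicating two-process synthesis problem — an LTL formula $\varphi$ over input variables $\mathit{in}_1,\mathit{in}_2$ and output variables $\mathit{out}_1,\mathit{out}_2$ — I would map it to the formula $\varphi' \in \LTL{\{\emptynet\}}$ obtained by keeping $\varphi$ verbatim (the atomic propositions $\inform{p}{x}$ and $\outform{p}{y}$ are already the right vocabulary). I would then argue the equivalence: $\varphi$ is realizable in the Pnueli-Rosner sense (by local strategies depending only on one's own input stream) if and only if $\varphi'$ is realizable by a distributed algorithm in our sense. The forward direction uses that any pair of Pnueli-Rosner local strategies is literally a pair $(f_1,f_2)$ of the required type; the backward direction uses the view collapse established above, which guarantees that any $(f_1,f_2)$ fulfilling $\varphi'$ depends only on the respective local input histories and hence yields Pnueli-Rosner local strategies. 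Because the mapping $\varphi \mapsto \varphi'$ is computable and preserves realizability in both directions, decidability of $\Synthesis{\{\emptynet\}}$ would entail decidability of the Pnueli-Rosner problem, contradicting its known undecidability.

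**Main obstacle.** The only genuinely delicate point is checking that the two models coincide \emph{on the nose} rather than merely morally — in particular, confirming that the round-synchronous input-delivery discipline and the quantification ``for all $w \in \Signals^\omega$'' in our definition of \emph{fulfills} line up with the environment's choice of input streams in the Pnueli-Rosner formulation, and that no extra expressive power leaks in through $\linkform{\emptynet}$. Once the view computation is made explicit, these are routine verifications, so I expect the proof to be short, essentially a matter of unfolding definitions and invoking the cited undecidability result.
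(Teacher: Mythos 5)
Your proposal is correct and takes essentially the same route as the paper: the paper gives no separate argument, stating the fact with a citation to Pnueli and Rosner precisely because the model with $\Nmodel = \{\emptynet\}$ \emph{is} the classical non-communicating two-process architecture, once one observes (as you do) that views collapse to local input histories and the link proposition carries no information. Your explicit unfolding of the definitions and the verbatim, realizability-preserving formula translation is exactly the routine identification implicit in that citation.
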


One also observes that undecidability of the synthesis problem is upward-closed:

\begin{lemma}\label{lemma:propagation}
Let $\Nmodel_1 \subseteq \Nmodel_2$.
If $\Synthesis{\Nmodel_1}$ is undecidable, then so is
  $\Synthesis{\Nmodel_2}$.
\end{lemma}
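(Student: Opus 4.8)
The plan is to exhibit a computable many-one reduction from $\Synthesis{\Nmodel_1}$ to $\Synthesis{\Nmodel_2}$, which suffices since $\Synthesis{\Nmodel_1}$ is assumed undecidable. Given an input $\varphi \in \LTL{\Nmodel_1}$, I map it to the formula
\[
\varphi' \;=\; \globallyform{\bigl(\textstyle\bigvee_{\dummynet \in \Nmodel_1}\linkform{\dummynet}\bigr)} \;\Longrightarrow\; \varphi\,,
\]
which is a well-formed formula of $\LTL{\Nmodel_2}$ because $\Nmodel_1 \subseteq \Nmodel_2$ (and $\Nmodel_1 \ne \emptyset$, since undecidability of $\Synthesis{\Nmodel_1}$ forces $\Nmodel_1$ to be a nonempty network model). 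The design idea is that the guard $\globallyform{\bigl(\bigvee_{\dummynet\in\Nmodel_1}\linkform{\dummynet}\bigr)}$ switches the specification off \emph{vacuously} on every input sequence in which the $\Nmodel_2$-adversary ever selects a link outside $\Nmodel_1$, while on the sequences that stay inside $\Nmodel_1$ forever $\varphi'$ collapses to $\varphi$. The map is clearly computable, so the whole argument reduces to proving that $\varphi$ is realizable over $\Nmodel_1$ if and only if $\varphi'$ is realizable over $\Nmodel_2$.

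For the forward direction, assume $\varphi$ is fulfilled over $\Nmodel_1$ by some $\strat = (\stratp{1},\stratp{2})$. The key enabling fact, already built into the model, is that the communication link is common knowledge: it is never hidden by $\bot$ in $\viewofseq{p}{\cdot}$, so each process can read off from its own view whether all links seen so far lie in $\Nmodel_1$. I define an algorithm $\strat' = (\stratp{1}',\stratp{2}')$ for $\Nmodel_2$ by setting $\stratp{p}'(v) = \stratp{p}(v)$ whenever every link occurring in the view $v$ belongs to $\Nmodel_1$ --- in which case $v$ is literally an element of $\Hist{p}$ for the model $\Nmodel_1$ --- and by any fixed default output otherwise. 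On an input sequence $w$ over $\Nmodel_2$ that stays inside $\Nmodel_1$, $\outcome{\strat'}{w}$ coincides with $\outcome{\strat}{w}$ and hence satisfies $\varphi$, so the guarded formula $\varphi'$ holds; on every other input sequence the guard fails at some round and $\varphi'$ holds vacuously. Thus $\strat'$ fulfills $\varphi'$.

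For the converse, assume $\strat'$ fulfills $\varphi'$ over $\Nmodel_2$. Since $\Nmodel_1 \subseteq \Nmodel_2$, every view of the $\Nmodel_1$-model is also a view of the $\Nmodel_2$-model, so restricting each $\stratp{p}'$ to the views in $\Hist{p}$ of the $\Nmodel_1$-model yields a well-defined algorithm $\strat$ for $\Nmodel_1$. Any input sequence over $\Nmodel_1$ is, in particular, an input sequence over $\Nmodel_2$ all of whose links lie in $\Nmodel_1$; on it the two algorithms generate the same execution, the guard of $\varphi'$ evaluates to true, and hence $\varphi$ is satisfied. Therefore $\strat$ fulfills $\varphi$, which establishes the equivalence and completes the reduction.

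I expect the conceptual content to be concentrated in the common-knowledge-of-links observation; the remaining difficulties are bookkeeping. Concretely, I must keep track of the fact that the signal alphabet $\Inpp{1}\times\Nmodel\times\Inpp{2}$ and the view sets $\Hist{p}$ genuinely differ between the two models, and verify that a view assembled only from $\Nmodel_1$-links is transported faithfully from one model to the other. This is exactly where visibility of the links does the work: it makes ``all links so far lie in $\Nmodel_1$'' a predicate each process can evaluate on its own view, which is what lets the default-output region be defined so that $\strat$ and $\strat'$ stay in lock-step on $\Nmodel_1$-sequences. Without it, the two algorithms could not be coupled and the reduction would break.
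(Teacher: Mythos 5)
Your proof is correct and is essentially the paper's own argument: the paper performs exactly the same reduction, mapping $\varphi \in \LTL{\Nmodel_1}$ to $\bigl(\globallyform{\bigvee_{\dummynet \in \Nmodel_1} \linkform{\dummynet}}\bigr) \Longrightarrow \varphi$ over $\Nmodel_2$ and noting that realizability transfers in both directions. The paper leaves the verification implicit in one sentence; your filled-in details, in particular the observation that links are common knowledge in the views so processes can detect a link outside $\Nmodel_1$ and fall back to a default output, are exactly the bookkeeping that makes that one-liner sound.
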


Indeed, formula $\varphi_1 \in \LTL{\Nmodel_1}$ is realizable iff
formula $\varphi_2 \in \LTL{\Nmodel_2}$ is realizable 
where we let $\varphi_2 =  \bigl(\globallyform{\bigvee_{\dummynet \in \Nmodel_1} \linkform{\dummynet}}\bigr)
\Longrightarrow \varphi_1$.

\medskip

Therefore, we will now focus on network models that do not contain $\emptynet$. Our main result is the following:

\begin{theorem}\label{thm:decidable}
For a network model $\Nmodel$,
  $\Synthesis{\Nmodel}$ is decidable if and only if
  ${\emptynet} \notin \Nmodel$.
\end{theorem}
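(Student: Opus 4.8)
The equivalence splits into two implications, one of which is immediate from the results already at hand. For the direction ``$\Synthesis{\Nmodel}$ decidable $\Longrightarrow \emptynet \notin \Nmodel$'' I would argue by contraposition: if $\emptynet \in \Nmodel$ then $\{\emptynet\} \subseteq \Nmodel$, and since $\Synthesis{\{\emptynet\}}$ is undecidable by Fact~\ref{fact:pnueli-rosner}, the propagation Lemma~\ref{lemma:propagation} immediately yields undecidability of $\Synthesis{\Nmodel}$. Hence decidability forces $\emptynet \notin \Nmodel$.

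The substantial direction is ``$\emptynet \notin \Nmodel \Longrightarrow \Synthesis{\Nmodel}$ decidable''. Here $\Nmodel$ is a nonempty subset of $\{\leftnet,\rightnet,\leftrightnet\}$, so on every round the chosen link delivers at least one direction; equivalently, after each round at least one of the two processes holds the \emph{complete} causal history of the play so far (process~$1$ on a $\leftnet$- or $\leftrightnet$-round, process~$2$ on a $\rightnet$- or $\leftrightnet$-round). This ``no total blackout'' property is exactly what is lost once $\emptynet$ is present, and it is the feature I would exploit to regain decidability. The plan is to first dispose of the asymmetric shape $\Nmodel \subseteq \{\leftnet,\leftrightnet\}$ (and, by the symmetric renaming of the two processes, $\Nmodel \subseteq \{\rightnet,\leftrightnet\}$), and then to reduce the remaining case, where $\Nmodel$ contains both $\leftnet$ and $\rightnet$, to it.

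For the asymmetric shape, communication to process~$1$ never fails, so process~$1$'s view always refines process~$2$'s and the information is hierarchically ordered $1 \succeq 2$ at every round. The only obstruction to a finite-state treatment is that a delivering link ships the \emph{entire} history, so messages are unbounded. The plan is to show that bounded messages nonetheless suffice: fix a deterministic $\omega$-automaton for the specification $\varphi$ and observe that, for the twin purposes of meeting $\varphi$ and informing process~$2$, the only relevant content of process~$1$'s unbounded history is the current automaton state together with the finite memory of the strategy. Replacing the transmitted history by this finite summary turns the instance into a finite game of hierarchical information, for which finite-memory winning strategies exist and are computable; this is the content of Theorem~\ref{thm:decidable-asymm}. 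I expect the crux here to be the summarisation argument establishing that finite memory is without loss of generality despite unbounded links; once that is in place, decidability is a standard game-solving step.

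It then remains to reduce the genuinely two-sided case, where $\{\leftnet,\rightnet\} \subseteq \Nmodel$, to the asymmetric one, and this I expect to be the main obstacle. The difficulty is that here the two views are \emph{incomparable} — on a $\leftnet$-round only process~$1$ advances, on a $\rightnet$-round only process~$2$ — and incomparable information is in general the very source of undecidability in Fact~\ref{fact:pnueli-rosner}. The lever is again the ``no total blackout'' property: I would classify link sequences in $\Nmodel^\omega$ by which directions occur infinitely often, a distinction expressible in $\LTL{\Nmodel}$ so that the reduction stays within the problem class. On sequences that are eventually one-directional the tail is an asymmetric instance handled above, so the core regime is the one in which both $\leftnet$- and $\rightnet$-rounds recur infinitely often and each process in turn periodically recovers the full history. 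Using the finite-summary idea, these recurring catch-ups should let one re-encode the play so that a single designated process carries the running finite state, restoring a hierarchy and yielding an equivalent instance over an asymmetric model to which Theorem~\ref{thm:decidable-asymm} applies. Carrying out this re-encoding faithfully — in particular verifying that no process is ever made to act on information it provably lacks, which an $\LTL{\Nmodel}$ specification cannot enforce on its own since that is a relational constraint across executions rather than a property of a single run — is the delicate point, and it is where I expect the real work to lie. Combining the two directions then gives Theorem~\ref{thm:decidable}.
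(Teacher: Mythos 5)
Your ``only if'' direction is exactly the paper's: Fact~\ref{fact:pnueli-rosner} combined with Lemma~\ref{lemma:propagation}. Your plan for the asymmetric case also coincides in structure with the paper's proof of Theorem~\ref{thm:decidable-asymm}: show that finite ``synchronization memory'' suffices (Lemma~\ref{lem::finmemory}) and then solve a $(2,1)$-player game in which player~$1$ has perfect information (Fact~\ref{fact:vanMW}). One caveat: what you present as an observation --- that the current state of a deterministic automaton for $\varphi$ plus finite strategy memory is all that ever needs to be transmitted --- is the main technical content, not an observation, and the correct summary is slightly subtler: the paper passes to a derived Rabin automaton with states $(s,R) \in S_\phi \times 2^{S_\phi}$ recording the \emph{set} of specification states visited since the last synchronization, and proves the summarization via strategy trees over the infinite direction set $\Signals_\leftrightnet\Signals_\leftnet^\ast$ together with a rational-run argument for Rabin tree automata (Lemma~\ref{lem::rationalrun}). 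Since you flag this as the crux, the real gap lies elsewhere.

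The genuine gap is your reduction of the case $\{\leftnet,\rightnet\} \subseteq \Nmodel$ to the asymmetric one. You propose to classify input sequences by which links occur infinitely often and to handle the resulting regimes separately. These regimes are tail events: no finite prefix determines membership, so a distributed algorithm cannot branch on them, and realizability of $\varphi$ restricted to each regime does not compose into realizability of $\varphi$ --- the regime-wise algorithms may prescribe incompatible behaviour on prefixes. This is precisely the obstruction you raise yourself at the end (no process may act on information it provably lacks), and the regime decomposition runs straight into it. The paper needs no case distinction on limit behaviour at all. Instead it constructs a single \emph{causal} translation $\trans{\cdot}: \Signals^\ast \to (\Signals')^\ast$, computed online by a two-state sequential transducer (Figure~\ref{fig:transldef}): process~$1$ of the target $\{\leftrightnet,\leftnet\}$-system always simulates whichever original process is currently better informed, the roles are swapped exactly when the direction of information flow reverses, and at each swap a dummy round $\inpsymb{\#}{\leftrightnet}{\#}$ is inserted so that the complete causal history is handed over at that instant. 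Decidability then follows because this translation is injective and transfers equality of views in both directions (Lemma~\ref{lem:injective}), maps $\omega$-regular specifications to $\omega$-regular ones of linear size (Lemma~\ref{lem:auttrans}), and yields an equivalence between realizability of $\varphi$ over $\Nmodel$ and realizability over $\Nmodel' = \{\leftrightnet,\leftnet\}$ of $\trans{L(\varphi)}$ together with all words whose input projection lies outside the image of $\trans{\cdot}$ (Lemma~\ref{lem:simulation}, Corollary~\ref{cor:autsynth}). This round-by-round role swap with inserted synchronizations is the missing key idea; without it, or an equivalent uniform construction, your argument for the two-sided case does not go through.
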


The ``only if'' direction follows from
Fact~\ref{fact:pnueli-rosner} and Lemma~\ref{lemma:propagation}.
The rest of the paper is devoted to the proof of the
``if'' direction of Theorem~\ref{thm:decidable}.
We will first consider $\Nmodel = \{\leftrightnet, \leftnet \}$
and then reduce the other cases to this particular network model.
By Lemma~\ref{lemma:propagation}, it is enough to do
this reduction for $\{\leftrightnet, \leftnet, \rightnet \}$.


\newcommand{\Dirl}{D_{\leftnet^+}}
\newcommand{\Dirlr}{D_{\leftrightnet\leftnet^\ast}}
\newcommand{\node}{u}

\section{Finite-Memory Distributed Algorithms for $\Nmodel = \{ \leftrightnet, \leftnet \}$}
\label{sec:onedirectional-model}

In this section, we suppose $\Nmodel = \{ \leftrightnet, \leftnet \}$.
We show that, in this case, synthesis is decidable:

\begin{theorem}\label{thm:decidable-asymm}
The problem $\Synthesis{\{ \leftrightnet, \leftnet \}}$ is decidable
(in 4-fold exponential time).
\end{theorem}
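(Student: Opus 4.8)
The plan is to exploit the strongly asymmetric information structure of $\Nmodel = \{\leftrightnet, \leftnet\}$ and reduce realizability to solving a \emph{finite} parity game. First I would record the key observation that process~$1$ \emph{always} has full information: since both $\leftrightnet$ and $\leftnet$ transmit process~$2$'s entire input history to process~$1$, a straightforward induction on the view definition gives $\viewofseq{1}{w} = w$ for every $w \in \Signals^\ast$. Process~$2$, on the other hand, has full information only up to the \emph{last} $\leftrightnet$-round; during the trailing maximal block of $\leftnet$-rounds it sees only its own inputs, so its uncertainty concerns exactly process~$1$'s inputs (and hence outputs) since that last synchronization. The information is therefore hierarchical, $\viewofseq{2}{w}$ being computable from $\viewofseq{1}{w} = w$, and this is what makes the problem tractable despite the unbounded message size.

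Next I would compile $\varphi$ into a deterministic parity automaton $\mathcal{A}$ with state set $Q$ over $\Signals \times \Outputs$ (via LTL\,$\to$\,nondeterministic B\"uchi\,$\to$\,determinization), so that an algorithm realizes $\varphi$ iff the run of $\mathcal{A}$ on $\outcome{\profile}{w}$ is accepting for every input sequence $w$. On top of $\mathcal{A}$ I would build a finite game whose state carries the \emph{true} automaton state $q$, which process~$1$ can track because it is fully informed, together with process~$2$'s \emph{knowledge set} $S \subseteq Q$, i.e.\ the set of automaton states consistent with its current view. The dynamics mirror the view definition: a $\leftrightnet$-round reveals everything to process~$2$ and \emph{resets} its knowledge to the singleton $\{q\}$, whereas a $\leftnet$-round leaves process~$1$'s input $x_1$ hidden and therefore \emph{expands} $S$ by taking all successors over $x_1 \in \Inpp{1}$. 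In a move the system chooses process~$1$'s output as a function of the exact state $q$, while in a $\leftnet$-round it must choose process~$2$'s output \emph{uniformly over all of} $S$; this uniformity is exactly the constraint that $y_2$ may depend only on $\viewofseq{2}{\cdot}$. Since $S$ ranges over the finite set $2^Q$, the game is finite, and memoryless determinacy of parity games yields the ``finite-memory suffices'' claim: process~$1$ can recompute $q$ and process~$2$ can recompute $S$ from their respective views by a finite automaton, so a memoryless winning strategy translates into finite-state $f_1, f_2$.

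The hard part is propagating the $\omega$-regular acceptance condition correctly through the knowledge-set construction. During an \emph{infinite} $\leftnet$-block process~$2$ never resynchronizes, so it must commit to outputs blindly forever while the parity condition still has to be satisfied on the \emph{true} run---precisely the notorious difficulty of imperfect-information $\omega$-regular games. I expect this to require augmenting $S$ with enough bookkeeping to certify acceptance of every branch consistent with process~$2$'s view, a determinization-style record layered on top of the powerset, which is where the final blow-up enters. I would then verify both directions of the reduction: a winning system strategy induces a realizing finite-memory algorithm as above, while conversely any realizing $(f_1,f_2)$, read through $\mathcal{A}$, yields a winning strategy, the uniformity constraint being met precisely because $f_2$ is $\viewofseq{2}{\cdot}$-measurable. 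Finally I would account for the complexity by composing the four exponential stages---LTL-to-B\"uchi, determinization, the knowledge-set powerset for process~$2$, and the acceptance bookkeeping together with game solving---to reach the claimed $4$-fold exponential bound.
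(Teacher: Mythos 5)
Your reading of the information structure matches the paper's: process~1 is fully informed ($\viewofseq{1}{w}=w$), process~2's uncertainty concerns only the rounds since the last $\leftrightnet$, and both you and the paper compile $\varphi$ into a deterministic $\omega$-automaton and build a finite game in which an abstraction of the history is revealed to process~2 at synchronization points. The first genuine gap is in your completeness direction, which is precisely the hard part of the theorem. You claim that any realizing $(f_1,f_2)$, ``read through $\mathcal{A}$,'' yields a winning strategy because $f_2$ is $\viewofseq{2}{\cdot}$-measurable; but measurability with respect to the (unbounded) view does not imply that $f_2$ factors through your finite state $(q,S)$: two histories reaching the same $(q,S)$ may elicit different outputs from $f_2$, and one must prove that a single consistent choice can always be made. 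Your appeal to memoryless determinacy of parity games cannot close this hole: once you impose the uniformity constraint on process~2's moves (and note, additionally, that your knowledge set $S$ is not even well defined without fixing process~1's hidden in-block behavior, since successors depend on $y_1$), the game is one of \emph{imperfect information} for a coalition, where memoryless --- indeed finite-memory --- determinacy fails in general and is exactly what has to be established. The paper devotes all of Section~\ref{sec:onedirectional-model} to this: distributed algorithms are encoded as strategy trees over the infinite direction set $\Signals_\leftrightnet\Signals_\leftnet^\ast$, a Rabin tree automaton $\T_\varphi$ accepts exactly the trees of realizing algorithms, and \emph{rational} accepting runs (Lemma~\ref{lem::rationalrun}, resting on positional determinacy of Rabin games) yield the finite-memory normal form (Lemma~\ref{lem::finmemory}). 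Only with that lemma in hand is the game reduction sound, and even then the resulting $(2,1)$-player game with one perfect-information player is solved via the van der Meyden--Wilke result (Fact~\ref{fact:vanMW}), not via parity determinacy.

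The second gap is one you flag yourself but do not resolve: how to propagate the $\omega$-regular acceptance condition through the abstraction, in particular along an infinite trailing $\leftnet$-block where process~2 never resynchronizes. Saying this ``requires a determinization-style record layered on top of the powerset'' names the difficulty without solving it. The paper's solution is concrete and different in shape from what you sketch: the automaton $\A$ has state set $S_\phi\times 2^{S_\phi}$, where the second component records the set of $\A_\phi$-states visited since the last $\leftrightnet$, so that acceptance of a run with infinitely many synchronizations can be read off the states reached at synchronization points alone (Claim~\ref{claim:bigsteps}); and executions with only finitely many synchronizations are handled not inside the game at all, but by the separate condition (T2) of the tree automaton $\T_\varphi$, which requires each tree label to be correct for \emph{every} infinite $\leftnet$-continuation. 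Without these two ingredients --- the finite-memory lemma and the explicit bookkeeping/(T2) mechanism --- your construction is a plausible plan rather than a proof, and the claimed 4-fold exponential bound cannot yet be accounted for.
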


As our setting features a dynamic architecture and unbounded message size
in terms of causal histories,
the proof of the theorem requires some new techniques.
In particular, we cannot apply the information-fork criterion
from \cite{finkbeiner2005uniform}, since our specifications can only
describe the link between the processes, and cannot constrain the contents of the messages.

The proof is spread over the remainder of this section as well as Section~\ref{sec:games}.
It crucially relies on the fact that, for every realizable specification $\phi$,
there is a distributed algorithm with a sort of \emph{finite memory} fulfilling it (as shown in this section).
This allows us to reduce, in Section~\ref{sec:games}, the problem of finding a distributed algorithm to
finding a winning strategy in a decidable
game (that we will call a $(2,1)$-player game thereafter) involving 
two cooperating players, where one player has imperfect information,
and an antagonistic environment.

\begin{remark}\label{rem:leftrightnet}
For the sake of technical simplification, we assume in
Sections~\ref{sec:onedirectional-model} and
\ref{sec:games},
without loss of generality, that
input sequences start with a symbol from $\Signals_\leftrightnet =
\Inpp{1} \times \{\leftrightnet\} \times \Inpp{2}$.
Instead of the original formula $\hat{\phi}$, we then simply take
$\phi=\nextform{\hat{\phi}}$.
That is, we can henceforth consider that
$\Hist{1} = \{\viewofseq{1}{w} \mid w \in \Signals_\leftrightnet\Signals^\ast\}$ and
$\Hist{2} = \{\viewofseq{2}{w} \mid w \in \Signals_\leftrightnet\Signals^\ast\}$, and
that a distributed algorithm $\profile$ \emph{fulfills} $\phi \in \LTL{\Nmodel}$ if,
for all $w \in \Signals_\leftrightnet\Signals^\omega$,
we have $\outcome{\profile}{w} \in L(\phi)$.
\end{remark}

\subsection{Finite-Memory Distributed Algorithms}

\subparagraph*{Deterministic Rabin Word Automata.}

Our decidability proof and the definition of a finite-memory distributed algorithm
rely on deterministic Rabin word automata (cf.\ \cite{Thomas90}):

\begin{definition}\label{def:wordaut}
A \emph{deterministic Rabin word automaton (\DRWA)}
over a finite alphabet $A$ is a tuple $\A = (S,\iota,\delta,\Acc)$,
where
$S$ is a finite set of states,
$\iota \in S$ is the \emph{initial state},
$\delta: S \times A \to S$ is the transition function, and
$\Acc \subseteq 2^S \times 2^S$ is the (Rabin) acceptance condition.
\end{definition}

The \DRWA $\A$ defines a language of infinite
words $L(\A) \subseteq A^\omega$ as follows.
We extend~$\delta$ to a function
$\delta: S \times A^\ast \to S$
letting $\delta(s,\epsilon) = s$ and
$\delta(s,aw) = \delta(\delta(s,a),w)$.
Let $w = a_0 a_1 a_2\ldots \in A^\omega$.
We define $\InfVisit{\A}{\iota}{w}= \{s \in S \mid s = \delta(\iota,a_0 \ldots a_i)$ for infinitely many $i \in \N\}$.
We say that $w$ is \emph{accepted} by $\A$ if
there is $(F,F') \in \Acc$ such that
$\InfVisit{\A}{\iota}{w} \cap F \neq \emptyset$ and
$\InfVisit{\A}{\iota}{w} \cap F' = \emptyset$,
i.e., some state of $F$ is visited infinitely often, whereas
all states from $F'$ are visited only finitely often.
We let $L(\A) = \{w \in A^\omega \mid w$ is accepted by $\A\}$.


\subparagraph*{Existence of Finite-Memory Distributed Algorithms.}

We are now ready to state that, if there is a distributed algorithm
that fulfills a specification $\varphi \in \LTL{\Nmodel}$,
then there is also a distributed algorithm
$f$  with finite ``synchronization memory''
in the following sense:
There is a \DRWA $\A$ over $\Signals \times \Omega$
such that
the output of a process for a history $wu$ with
$u \in \Signals_\leftrightnet\Signals_\leftnet^\ast$
only depends on $u$ and the state that
$\A$ reaches after reading $\outcome{\profile}{w}$.
Let $\Signals_{\bot\leftnet} = \{\bot\} \times \{\leftnet\} \times X_2$.

\begin{lemma}\label{lem::finmemory}
Let $\varphi \in \LTL{\Nmodel}$.
There is a \DRWA $\A = (S,\iota,\delta,\Acc)$,
  with $\delta: S \times (\Signals \times \Outputs) \to S$,
  such that
the following are equivalent:
\begin{enumerate}
\item[(1)] There is a distributed algorithm $\strat = (\strat_1,\strat_2)$ that fulfills $\varphi$.

\item[(2)] There is a distributed algorithm $\profile =
  (\strat_1,\strat_2)$ that fulfills $\varphi$ and such that,
for all words $w,w' \in \{\epsilon\} \cup \Sigma_\leftrightnet\Sigma^\ast$ satisfying
$\delta(\iota,\outcome{\profile}{w}) =
\delta(\iota,\outcome{\profile}{w'})$,
the following hold:
\begin{itemize}
\item $\strat_1(wu) = \strat_1(w'u)$ for all $u \in \Signals_\leftrightnet\Signals_{\leftnet}^\ast$

\item $\strat_2(wu) = \strat_2(w'u)$ for all $u \in \Signals_\leftrightnet\Signals_{\bot\leftnet}^\ast$
\end{itemize}
\end{enumerate}
\end{lemma}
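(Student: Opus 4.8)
The direction $(2)\Rightarrow(1)$ is immediate, so I concentrate on $(1)\Rightarrow(2)$. I first fix a \DRWA\ $\A=(S,\iota,\delta,\Acc)$ over $\Signals\times\Outputs$ with $L(\A)=L(\varphi)$, obtained by the standard translation of LTL into a deterministic $\omega$-automaton; $\A$ will play the role of the finite memory. The structural fact driving the proof is the block decomposition forced by $\Nmodel=\{\leftrightnet,\leftnet\}$ and the convention of Remark~\ref{rem:leftrightnet}: every history factors into maximal blocks from $\Signals_\leftrightnet\Signals_\leftnet^\ast$. Process~$1$ has full information ($\viewofseq{1}{w}=w$), whereas process~$2$'s view is the full history up to the last $\leftrightnet$ followed by the $\bot$-erased run of $\leftnet$-rounds. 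Hence each $\leftrightnet$ re-synchronizes the processes: immediately after it both recover the entire past, can recompute $\outcome{\profile}{\cdot}$, and thus share the state $\delta(\iota,\outcome{\profile}{w})$ reached after the already-completed prefix $w$. This state is the only data I carry across blocks, and soundness rests on the residual-language property of $\A$ (whether an execution $w e'$ is accepted depends on the prefix $w$ only through $\delta(\iota,w)$).

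Next I recast a distributed algorithm as a family of \emph{local block-strategies}. After a history $w$, the behaviour of $\profile$ inside the next (fresh) block is a pair of functions $\lambda^1\colon\Signals_\leftrightnet\Signals_\leftnet^\ast\to\Outp{1}$ and $\lambda^2\colon\Signals_\leftrightnet\Signals_{\bot\leftnet}^\ast\to\Outp{2}$, namely $\lambda^p(u)=\strat_p(wu)$; the imperfect information of process~$2$ is encoded exactly in the $\bot$-erased domain of $\lambda^2$. Since the upcoming $\leftrightnet$ makes $w$ common knowledge, every such family is implementable, and conversely every $\profile$ arises this way; condition~$(2)$ says precisely that the family may be chosen to depend on $w$ only through $\delta(\iota,\outcome{\profile}{w})$. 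I therefore model realizability as a two-player \emph{perfect-information} Rabin game $\mathcal{G}$: the positions owned by the coalition (Eve) are the states $S$; at $s$ Eve commits one local block-strategy $(\lambda^1,\lambda^2)$, after which the environment (Adam) plays an arbitrary block — choosing the inputs, the $\leftnet$-rounds, and the point at which it re-synchronizes with a new $\leftrightnet$, or never, yielding a final infinite block — and this drives $\A$ from $s$ through finitely many intermediate states to the next boundary state $s'$. The decisive point is that Eve's genuine choices occur only at boundaries, where the position $s$ is common knowledge, so $\mathcal{G}$ is perfect-information even though each of Eve's moves conceals an imperfect-information within-block interaction.

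Under this reduction distributed algorithms correspond to Eve strategies, fulfilling algorithms to winning strategies, and the finite-memory algorithms of $(2)$ to \emph{positional} Eve strategies (those choosing the committed block-strategy as a function of the current position $s$ alone). Eve's objective is the Rabin condition $\Acc$ on the induced run of $\A$, so I invoke positional determinacy of Rabin games for the player owning the Rabin condition: as Eve wins from $\iota$ via the given $\profile$, she wins positionally. Reading this positional strategy back yields a family $(\lambda_s)_{s\in S}$ and hence a distributed algorithm $\profile'$ satisfying the two displayed equalities, after checking that every entry state of a fresh block is itself a boundary state, so that $(\lambda_s)$ is defined wherever needed.

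The main obstacle is the faithfulness of this reduction, not the automaton construction. Three points need care. First, one must verify that committing a whole block-strategy at a boundary genuinely models the round-by-round distributed behaviour, and that the correspondence preserves both winning and the ``depends only on $\delta(\iota,\cdot)$'' property. Second, the winning condition of $\mathcal{G}$ must account for the states visited strictly inside blocks: along a play with infinitely many blocks these enter the infinitely-often set, so I would quotient each macro-move by the only acceptance-relevant data — the set $P\subseteq S$ of within-block states together with the end state $s'$ — and re-express $\Acc$ as a Rabin condition on these edges via the pairs $\bigl(\{(P,s')\mid P\cap F\neq\emptyset\},\{(P,s')\mid P\cap F'\neq\emptyset\}\bigr)$ for $(F,F')\in\Acc$, the final infinite block being dispatched by the residual-language property. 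Third — and this is exactly where the naive idea of fixing one representative history per state and copying its block-behaviour fails — one must guarantee that the run reassembled from infinitely many heterogeneous blocks is still accepted; this is precisely what positional determinacy of the Rabin player delivers, which is why reducing to a Rabin game, rather than an ad hoc merging argument, is the right move.
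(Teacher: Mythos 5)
Your proposal is correct and is essentially the paper's own proof in game-theoretic clothing: your block decomposition, committed block-strategies with $\bot$-erased domains for process~2, the $(P,s')$ bookkeeping of states visited inside a block, and the residual-language treatment of a final never-resynchronizing block correspond exactly to the paper's strategy trees over $\Dir = \Signals_\leftrightnet\Signals_\leftnet^\ast$, its augmented automaton $\A$ with state space $S_\phi \times 2^{S_\phi}$ (whose Rabin pairs are read off the subset component), and condition (T2) of its tree automaton $\T_\varphi$, while your appeal to positional determinacy of Rabin games for the Rabin player is precisely the paper's rational-run lemma (Lemma~\ref{lem::rationalrun}) for input-free Rabin tree automata. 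The one point to flag is that this determinacy must actually be justified for your non-standard arena (infinite branching, Rabin condition on edge data $(P,s')$, plays that may terminate with an infinite block) --- the paper acknowledges exactly this and proves its rational-run lemma directly in the long version; alternatively, pushing the edge data into the positions recovers the paper's setting verbatim, at the harmless cost that the memory automaton $\A$ of the lemma becomes the product $S_\phi \times 2^{S_\phi}$ rather than the automaton for $\varphi$ itself.
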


\noindent
Note that the acceptance condition and the language of $\A$ are not important in the lemma.

\subsection{Distributed Algorithms as Strategy Trees}
\label{sec:strategy-trees}

Section~\ref{sec:strategy-trees} is devoted to the proof of Lemma~\ref{lem::finmemory}.
The first step is to represent a distributed algorithm as
a \emph{strategy tree}, whose branching structure reflects the algorithm's choices depending on the various inputs.
We then build a tree automaton that accepts a
strategy tree iff it represents a distributed algorithm fulfilling
the given formula $\phi$.
The challenge is to define the tree automaton in such a way that its strategies
  can be cast into hierarchical multiplayer games with \emph{finite sets of observations},
  and that winning strategies within these games are equivalent to distributed algorithms.
We show in this section that this is possible by collapsing potentially unboundedly long
input sequences into an unbounded branching structure.
With this construction, we can show that, if the tree automaton recognizes \emph{some} strategy tree,
  then it also accepts one that represents a finite-memory distributed
  algorithm.

\subparagraph*{Trees and Rabin Tree Automata.}

Let $\Alpha$ be a nonempty (possibly infinite) alphabet and $\Dir$ be a nonempty (possibly infinite)
set of \emph{directions}.
An \emph{$\Alpha$-labeled $\Dir$-tree} is a
  mapping $t: \Dir^\ast \to \Alpha$.
  In particular, $\epsilon$ is the root with label $t(\epsilon)$,
  and $ud$ is the $d$-successor of node $u \in \Dir^\ast$, with label $t(ud)$.

\begin{definition}\label{def:treeaut}
A (nondeterministic) \emph{Rabin tree automaton (\RTA)} over $\Alpha$-labeled $\Dir$-trees is a tuple
$\T = (\tStates,\tinit,\tTrans,\tAcc)$ with
  finite set of states $\tStates$,
  initial state $\tinit \in \tStates$,
  acceptance condition $\tAcc \subseteq 2^\tStates \times 2^\tStates$,
  and (possibly infinite) set of transitions
    $\tTrans \subseteq \tStates \times \Alpha \times \tStates^\Dir$.
\end{definition}

A \emph{run} of $\T$ on an $\Alpha$-labeled $\Dir$-tree $t$ is
  an $\tStates$-labeled $\Dir$-tree $\rho: \Dir^\ast \to \tStates$
  where
  $\rho(\epsilon) = \iota$ (the root is assigned the initial state) and,
  for all $u \in \Dir^\ast$, $\bigl(\rho(u),t(u),d \in \Dir \mapsto \rho(ud)\bigr) \in \tTrans$.
The latter is the transition \emph{applied} at $u$, and we denote it by $\transat{\rho}{u}$.

A path of run $\rho$ is a word $\tpath=d_0d_1d_2 \ldots \in D^\omega$, inducing
the sequence $\epsilon, d_0, d_0d_1, d_0d_1d_2, \ldots$ of nodes visited along $\tpath$.
We let $\Inf(\tpath)$ be the set of states that occur infinitely often as the labels
of these nodes.
Path $\tpath$ is \emph{accepting} if
  there is $(F,F') \in \tAcc$ such that $\Inf(\tpath) \cap F \neq \emptyset$
  and $\Inf(\tpath) \cap F' = \emptyset$.
Run $\rho$ is \emph{accepting} if all its paths are
  accepting.
Finally, $\T$ defines the language of $\Alpha$-labeled $\Dir$-trees
  $L(\T) = \{t: \Dir^\ast \to \Alpha \mid$ there is an accepting run of $\T$ on $t\}$.

\begin{lemma}\label{lem::rationalrun}
Let $\Alpha$ be a singleton alphabet,
  $\Dir$ a nonempty (possibly infinite) set of directions, and
  $\T$ an \RTA over $\Alpha$-labeled $\Dir$-trees (as $A$ is a singleton, we say that $\T$ is \emph{input-free}).
Call a run $\rho$ of $\T$ on the unique $\Alpha$-labeled $\Dir$-tree \emph{rational} if,
  for all $w,w' \in \Dir^\ast$
  with $\rho(w) = \rho(w')$, we have $\transat{\rho}{w} = \transat{\rho}{w'}$.
If $L(\T) \neq \emptyset$, then there is a rational accepting run of $\T$.
\end{lemma}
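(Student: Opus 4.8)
The plan is to recast the existence of an accepting run as a winning condition in a two-player \emph{acceptance game} $\mathcal{G}_\T$ and then invoke the half-positional (memoryless) determinacy of Rabin objectives. In $\mathcal{G}_\T$, player \emph{Constructor} builds a run while player \emph{Pathfinder} tries to expose a rejecting path. Constructor's positions are the states $\tStates$; from a state $s$, Constructor picks a transition $(s,a,g)\in\tTrans$, where $a$ is the unique symbol of $\Alpha$ and $g\in\tStates^{\Dir}$; from there Pathfinder picks a direction $d\in\Dir$ and the play proceeds to $g(d)$. Starting in $\tinit$, a play produces an infinite sequence of states, and Constructor wins iff this sequence satisfies the Rabin condition $\tAcc$, i.e.\ the set of states seen infinitely often meets $F$ and avoids $F'$ for some $(F,F')\in\tAcc$.

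The key is a correspondence between runs and strategies. Since $\Alpha$ is a singleton, every direction sequence $u\in\Dir^\ast$ is reachable (Pathfinder may pick directions freely), and under any fixed Constructor strategy the play reaching $u$ — and hence the state there — is uniquely determined. Thus Constructor strategies are in bijection with runs $\rho$ of $\T$, the strategy playing $\transat{\rho}{u}$ at node $u$, and the plays consistent with a strategy are exactly the paths of the corresponding run. Under this bijection, $\rho$ is accepting (all paths accepting) iff the strategy is winning, and $\rho$ is rational iff the strategy is \emph{memoryless}: indeed $\transat{\rho}{u}$ is determined by the successor tuple $d\mapsto\rho(ud)$, which depends only on $\rho(u)$ precisely when the strategy ignores the history.

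Now I carry out the argument. By hypothesis $L(\T)\neq\emptyset$, and as there is a unique $\Alpha$-labeled $\Dir$-tree this means some run is accepting; equivalently, Constructor has a winning strategy from $\tinit$. Constructor's objective is a Rabin condition over the \emph{finite} color set $\tStates$, and such objectives are half-positionally determined over arbitrary — in particular infinitely branching — arenas (Emerson and Jutla; Klarlund). Hence Constructor has a memoryless winning strategy $\sigma$, assigning to each state $s$ a fixed transition $\sigma(s)=(s,a,g_s)\in\tTrans$. Unfolding $\sigma$ from $\tinit$ gives the run $\rho$ with $\rho(\epsilon)=\tinit$ and $\rho(ud)=g_{\rho(u)}(d)$; every path of $\rho$ is a $\sigma$-play, hence accepting, while $\rho(w)=\rho(w')$ forces $\transat{\rho}{w}=\sigma(\rho(w))=\sigma(\rho(w'))=\transat{\rho}{w'}$, so $\rho$ is a rational accepting run, as desired.

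The main obstacle is exactly that $\Dir$ and $\tTrans$ may be infinite, so $\mathcal{G}_\T$ is an infinitely branching game with infinitely many positions, and the usual finite-arena form of memoryless determinacy does not apply off the shelf. What rescues the argument is that the winning condition refers only to the finite set $\tStates$ of Constructor positions, so there are only finitely many Rabin pairs (colors); for such conditions the half-positionality of the Rabin player is robust under passing to arbitrary arenas. Verifying the run/strategy bijection and that memoryless strategies unfold to rational runs is then routine.
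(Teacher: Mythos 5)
Your proof is correct, and it is essentially the route the paper itself gestures at: the paper remarks that the lemma ``essentially follows'' from positional determinacy of Rabin games for the player with the Rabin objective \cite{Klarlund94}, and your Constructor/Pathfinder acceptance game plus the run--strategy correspondence is precisely how that fact gets applied. The difference lies in how the infinite-$\Dir$ subtlety is handled: because the arena of your game has infinitely many positions and infinite branching, the paper deliberately does \emph{not} apply memoryless determinacy off the shelf and instead gives a direct, self-contained proof in the long version, whereas you discharge the issue by invoking half-positional determinacy of Rabin conditions over \emph{arbitrary} arenas. That stronger statement is indeed true --- it follows from Klarlund's progress-measure argument and from Zielonka's analysis of games on arbitrary finitely-coloured graphs --- but your attribution should be precise: the classical Emerson--Jutla theorem is stated for finite arenas, so the infinite-arena version your argument actually needs rests on Klarlund/Zielonka, not on Emerson--Jutla. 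Two minor points of hygiene: the Rabin condition should formally be a condition on colours of \emph{all} positions (e.g., colour each Pathfinder position by its source state) so that the infinitely-often-visited colour set of a play coincides with the infinitely-often-visited state set; and the ``bijection'' between strategies and runs holds only after identifying strategies that agree on histories consistent with themselves --- the two directions you actually use (accepting run $\Rightarrow$ winning strategy, memoryless winning strategy $\Rightarrow$ rational accepting run) are the correct ones and are verified correctly. In short, your approach buys brevity by outsourcing the combinatorial core to a known determinacy theorem in its strong form; the paper's direct proof buys self-containedness by not relying on the exact scope of that theorem.
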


The lemma essentially follows from
the fact that Rabin games are positionally determined for
the player that aims at satisfying the Rabin objective
\cite{Klarlund94}.
To account for our
non-standard setting of tree automata with possibly infinite $\Dir$,
we give a direct proof in \cite{abs-2002-07545}.

\subparagraph*{Strategy Trees.}

Recall that our goal is to show Lemma~\ref{lem::finmemory} using
strategy trees as a representation of distributed algorithms.
Strategy trees are trees over the (infinite) set of directions $\Dir =
\Signals_\leftrightnet\Signals_\leftnet^\ast$, with the aim to isolate
the positions where a resynchronization occurs, via a letter from
$\Signals_\leftrightnet$. By Remark~\ref{rem:leftrightnet}, we only have 
to consider $\Signals_\leftrightnet\Signals^\ast = (\Signals_\leftrightnet\Signals_\leftnet^\ast)^+=D^+$. 
Hence, to avoid additional notation, we can identify nonempty words in $\Dir^\ast$ with words 
in $\Signals_\leftrightnet\Signals^\ast$. 
It will always be clear from the context whether the underlying
alphabet is $\Dir$ or $\Signals$.

Intuitively, a node $u \in \Dir^\ast$ represents a given history, and
the label of $u$ represents the outputs for possible continuations
from $\Signals_\leftrightnet\Signals_\leftnet^\ast$.  More precisely,
the set $\Pairs$ of labels is the set of pairs
$\tlab=(\tlab_1,\tlab_2)$ where $\tlab_1:
\Signals_\leftrightnet\Signals_\leftnet^\ast \to \Outp{1}$ and
$\tlab_2 : \Signals_\leftrightnet \leftviews^\ast \to \Outp{2}$.  For
$w \in \Signals_\leftrightnet\Signals_\leftnet^\ast$, we define
$\outcome{\tlab}{w} \in (\Signals_\leftrightnet \times \Omega)(\Signals_\leftnet \times \Omega)^\ast$ as expected
(cf.\ the definition of $\outcome{f}{w}$ for a distributed algorithm $f$).
Similarly, for $w \in \Signals_\leftrightnet\Signals_\leftnet^\omega$, we obtain a word
$\outcome{\tlab}{w} \in (\Signals_\leftrightnet \times \Omega)(\Signals_\leftnet \times \Omega)^\omega$.

A \emph{strategy tree} is a $\Pairs$-labeled $\Dir$-tree $t: \Dir^\ast \to \Pairs$.
For $\node \in \Dir^\ast$, let $(\tlab^u_1,\tlab^u_2)$ refer to $t(u)$.
The distributed algorithm associated
with $t$ is denoted by $\profile_{t}$
and is defined as $\profile_{t} = (f_1,f_2)$ as follows
(recall that $\leftviews =
\{\bot\} \times \{\leftnet\} \times \Inpp{2}$):
\begin{itemize}
\item
$f_1(uu') = \tlab_1^{u}(u')$ for all
$u \in \{\epsilon\} \cup \Signals_\leftrightnet\Signals^\ast$ and
$u' \in \Signals_\leftrightnet\Signals_\leftnet^\ast$
\item
$f_2(uu') = \tlab_2^{u}(u')$ for all
$u \in \{\epsilon\} \cup \Signals_\leftrightnet\Signals^\ast$, and
$u' \in \Signals_\leftrightnet\leftviews^\ast$
\end{itemize}

\noindent
In $\tlab_1^{u}(u')$ and $\tlab_2^{u}(u')$, we consider the unique decomposition of $u$ over $D$
so that $f_1$ and $f_2$ are well-defined.

\begin{remark}
The mapping $t \mapsto f_t$ is a bijection. In particular,
for every distributed algorithm $f$, there is a strategy tree $t$
such that $f_t = f$.
\end{remark}

\begin{example}
Suppose $\Inpp{1} = \Inpp{2} = \Outp{1} = \Outp{2} = \{\Zero,\One\}$.
Figure~\ref{fig:strategy-tree} depicts a part of a strategy tree $t$.
Its nodes are gray-shaded.
The labels of nodes of $t$ are themselves represented as (infinite) trees.
Consider the input sequence
$w = \inpsymb{\One}{\leftrightnet}{\One}
\inpsymb{\Zero}{\leftnet}{\Zero}
\inpsymb{\One}{\leftrightnet}{\One}
\inpsymb{\Zero}{\leftnet}{\Zero} \in \Signals_\leftrightnet\Signals^\ast$.
To know what $f_t$ outputs for the first two signals, we look
at the blue-colored nodes of the trees associated with the root of $t$.
To determine the outputs for the two remaining signals,
we look at the red-colored nodes of the trees associated with node $d$.
We thus get
$\outcome{\profile_{t}}{w}
=
(\inpsymb{\One}{\leftrightnet}{\One},(\Zero,\Zero))
(\inpsymb{\Zero}{\leftnet}{\Zero},(\Zero,\One))
(\inpsymb{\One}{\leftrightnet}{\One},(\One,\Zero))
(\inpsymb{\Zero}{\leftnet}{\Zero},(\One,\One))
$
for the whole word $w$.
\exend
\end{example}

\newcommand{\treex}{1.4}
\begin{figure}[t]
\centering
\begin{tikzpicture}[scale=0.82, every node/.style={scale=0.85}]
\tikzset{state/.append style={inner sep=0pt,minimum size=0.8cm}}

\node[state,fill=gray!40] (s0) at (0,0) {$\epsilon$};
\node[state,fill=gray!40] (s0-l) at ($ (s0) + (-1,-4.1) $) {};
\node[state,fill=gray!40] (s0-r) at ($ (s0) + ( 1,-4.1) $) {$d$};
\node[state,fill=gray!40] (s0-rl) at ($ (s0-r) + (-1,-3) $) {$dd$};

\path (s0-rl) -- ++(-0.8,0) node {\large $\dots$};
\path (s0-rl) -- ++( 0.8,0) node {\large $\dots$};

\path (s0-l) -- ++(-0.8,0) node {\large $\dots$};
\path (s0-l) -- (s0-r) node[midway] {\large $\dots$};

\draw[-,thick] (s0) -- (s0-l)
  node[pos=0.47,left]
  {\small $\inpsymb{\Zero}{\leftrightnet}{\Zero}$}
  node[pos=0.6,left]
  {\small $\inpsymb{\One}{\leftnet}{\Zero}$}
  node[pos=0.73,left]
  {\small $\inpsymb{\Zero}{\leftnet}{\One}$};

\draw[-,line width=0.8mm,blue!30!white] (s0) -- (s0-r)
  node[pos=0.53,right,black]
  {\small $\inpsymb{\One}{\leftrightnet}{\One}$}
  node[pos=0.66,right,black]
  {\small $\inpsymb{\Zero}{\leftnet}{\Zero}$};

\draw[-,line width=0.8mm,red!30!white] (s0-r) -- (s0-rl)
  node[pos=0.5,left,black]
  {\small $\inpsymb{\One}{\leftrightnet}{\One}$}
  node[pos=0.7,left,black]
  {\small $\inpsymb{\Zero}{\leftnet}{\Zero}$};

\draw [decorate,decoration={brace,amplitude=2pt,raise=3pt},yshift=0pt]
(1.7,-1.9) -- ++ (0.25,-0.9)
node [black,midway,xshift=0.5cm,yshift=0.1cm]
{\small $d$};

\draw[-,dashed] (s0) -- ++(10,0);

\node[state,fill=white] (lambda1e) at ($ (s0) + (4.8,0) $) {$\lambda^\epsilon_1$};
\node[state,fill=white] (lambda1e-l) at ($ (lambda1e) + (-\treex,-1.5) $) {$\One$};
\node[state,fill=blue!30!white] (lambda1e-r) at ($ (lambda1e) + (\treex,-1.5) $) {$\Zero$};
\path (lambda1e-l) -- (lambda1e-r) node[midway] {\large $\dots$};

\node[state,fill=blue!30!white] (lambda1e-rl) at ($ (lambda1e-r) + (-\treex,-1.5) $) {$\Zero$};
\node[state,fill=white] (lambda1e-rr) at ($ (lambda1e-r) + (\treex,-1.5) $) {$\One$};
\path (lambda1e-rl) -- (lambda1e-rr) node[midway] {\large $\dots$};

\draw[-,thick] (lambda1e) -- (lambda1e-l) node[midway,black,fill=white]
  {\small $\inpsymb{\Zero}{\leftrightnet}{\Zero}$};
\draw[-,line width=0.8mm,blue!30!white] (lambda1e) -- (lambda1e-r) node[midway,black,fill=white]
  {\small $\inpsymb{\One}{\leftrightnet}{\One}$};
\draw[-,line width=0.8mm,blue!30!white] (lambda1e-r) -- (lambda1e-rl) node[midway,black,fill=white]
  {\small $\inpsymb{\Zero}{\leftnet}{\Zero}$};
\draw[-,thick] (lambda1e-r) -- (lambda1e-rr) node[midway,black,fill=white]
  {\small $\inpsymb{\One}{\leftnet}{\One}$};

\node[state,fill=white] (lambda2e) at ($ (s0) + (10,0) $) {$\lambda^\epsilon_2$};
\node[state,fill=white] (lambda2e-l) at ($ (lambda2e) + (-\treex,-1.5) $) {$\One$};
\node[state,fill=blue!30!white] (lambda2e-r) at ($ (lambda2e) + (\treex,-1.5) $) {$\Zero$};
\path (lambda2e-l) -- (lambda2e-r) node[midway] {\large $\dots$};

\node[state,fill=blue!30!white] (lambda2e-rl) at ($ (lambda2e-r) + (-\treex,-1.5) $) {$\One$};
\node[state,fill=white] (lambda2e-rr) at ($ (lambda2e-r) + (\treex,-1.5) $) {$\Zero$};

\draw[-,thick] (lambda2e) -- (lambda2e-l) node[midway,black,fill=white]
  {\small $\inpsymb{\Zero}{\leftrightnet}{\Zero}$};
\draw[-,line width=0.8mm,blue!30!white] (lambda2e) -- (lambda2e-r) node[midway,black,fill=white]
  {\small $\inpsymb{\One}{\leftrightnet}{\One}$};
\draw[-,line width=0.8mm,blue!30!white] (lambda2e-r) -- (lambda2e-rl) node[midway,black,fill=white]
  {\small $\inpsymb{\bot}{\leftnet}{\Zero}$};
\draw[-,thick] (lambda2e-r) -- (lambda2e-rr) node[midway,black,fill=white]
  {\small $\inpsymb{\bot}{\leftnet}{\One}$};

\draw[-,dashed] (s0-r) -- ++(9,0);

\node[state,fill=white] (lambda1d) at ($ (s0-r) + (3.8,0) $) {$\lambda^d_1$};
\node[state,fill=white] (lambda1d-l) at ($ (lambda1d) + (-\treex,-1.5) $) {$\Zero$};
\node[state,fill=red!30!white] (lambda1d-r) at ($ (lambda1d) + (\treex,-1.5) $) {$\One$};
\path (lambda1d-l) -- (lambda1d-r) node[midway] {\large $\dots$};

\node[state,fill=red!30!white] (lambda1d-rl) at ($ (lambda1d-r) + (-\treex,-1.5) $) {$\One$};
\node[state,fill=white] (lambda1d-rr) at ($ (lambda1d-r) + (\treex,-1.5) $) {$\Zero$};
\path (lambda1d-rl) -- (lambda1d-rr) node[midway] {\large $\dots$};

\draw[-,thick] (lambda1d) -- (lambda1d-l) node[midway,black,fill=white]
  {\small $\inpsymb{\Zero}{\leftrightnet}{\Zero}$};
\draw[-,line width=0.8mm,red!30!white] (lambda1d) -- (lambda1d-r) node[midway,black,fill=white]
  {\small $\inpsymb{\One}{\leftrightnet}{\One}$};
\draw[-,line width=0.8mm,red!30!white] (lambda1d-r) -- (lambda1d-rl) node[midway,black,fill=white]
  {\small $\inpsymb{\Zero}{\leftnet}{\Zero}$};
\draw[-,thick] (lambda1d-r) -- (lambda1d-rr) node[midway,black,fill=white]
  {\small $\inpsymb{\One}{\leftnet}{\One}$};

\node[state,fill=white] (lambda2d) at ($ (s0-r) + (9,0) $) {$\lambda^d_2$};
\node[state,fill=white] (lambda2d-l) at ($ (lambda2d) + (-\treex,-1.5) $) {$\Zero$};
\node[state,fill=red!30!white] (lambda2d-r) at ($ (lambda2d) + (\treex,-1.5) $) {$\Zero$};
\path (lambda2d-l) -- (lambda2d-r) node[midway] {\large $\dots$};

\node[state,fill=red!30!white] (lambda2d-rl) at ($ (lambda2d-r) + (-\treex,-1.5) $) {$\One$};
\node[state,fill=white] (lambda2d-rr) at ($ (lambda2d-r) + (\treex,-1.5) $) {$\One$};

\draw[-,thick] (lambda2d) -- (lambda2d-l) node[midway,black,fill=white]
  {\small $\inpsymb{\Zero}{\leftrightnet}{\Zero}$};
\draw[-,line width=0.8mm,red!30!white] (lambda2d) -- (lambda2d-r) node[midway,black,fill=white]
  {\small $\inpsymb{\One}{\leftrightnet}{\One}$};
\draw[-,line width=0.8mm,red!30!white] (lambda2d-r) -- (lambda2d-rl) node[midway,black,fill=white]
  {\small $\inpsymb{\bot}{\leftnet}{\Zero}$};
\draw[-,thick] (lambda2d-r) -- (lambda2d-rr) node[midway,black,fill=white]
  {\small $\inpsymb{\bot}{\leftnet}{\One}$};

\end{tikzpicture}
\caption{A strategy tree $t$\label{fig:strategy-tree}.
}
\end{figure}

Now, Lemma~\ref{lem::finmemory} is a
consequence of the following lemma:

\begin{lemma}\label{lem::finmemorytrees}
Let $\varphi \in \LTL{\Nmodel}$.
There is a \DRWA
$\A = (S, \iota,\delta,\Acc)$,
  with $\delta: S \times (\Signals \times \Outputs) \to S$,
  such that
the following are equivalent:
\begin{enumerate}
\item[(1)] There is a strategy tree $t$ such that
$\profile_{t}$ fulfills $\varphi$.

\item[(2)] There is a strategy tree $t$ such that
(a)
$\profile_{t}$ fulfills $\varphi$, and
(b)
for all $w,w' \in \Dir^\ast$ with
$\delta(\iota,\outcome{\profile_{t}}{w}) =
\delta(\iota,\outcome{\profile_{t}}{w'})$,
we have $t(w) = t(w')$.
\end{enumerate}
\end{lemma}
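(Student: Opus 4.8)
The plan is to prove the nontrivial implication $(1)\Rightarrow(2)$ by encoding strategy trees as runs of an \emph{input-free} \RTA over $\Dir$-trees and then invoking Lemma~\ref{lem::rationalrun} (the converse $(2)\Rightarrow(1)$ being immediate). Concretely, I will design a \DRWA $\A=(S,\iota,\delta,\Acc)$ and an input-free \RTA $\T$ over $\Dir$-trees whose states are the reachable states of $\A$, so that a run of $\T$ carrying state $\delta(\iota,\outcome{\profile_t}{w})$ at node $w$ corresponds exactly to a strategy tree $t$, with $L(\T)\neq\emptyset$ iff some $\profile_t$ fulfills $\varphi$. Since $\T$ is input-free, Lemma~\ref{lem::rationalrun} turns a nonempty language into a \emph{rational} accepting run, i.e.\ one in which nodes carrying the same state apply the same transition, hence (reading the label off the transition) carry the same tree label. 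Because the node state at $w$ is $\delta(\iota,\outcome{\profile_t}{w})$, this is precisely condition (b). The whole difficulty is therefore to choose the granularity of $\delta$ so that ``same state'' is exactly the equivalence under which labels may be merged.

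For the construction of $\A$, I start from any \DRWA $\B=(Q,\iota_\B,\delta_\B,\Acc_\B)$ with $L(\B)=L(\varphi)$. The point is that a branch through $\T$ only exposes the states of $\B$ at \emph{block boundaries}, i.e.\ at the $\leftrightnet$-positions separating consecutive directions, whereas Rabin acceptance of the full execution also depends on $\B$-states occurring \emph{inside} blocks. I thus enrich $\B$ into $\A$ with state space $S=Q\times 2^Q$: the first component is the current $\B$-state $q$, and the second is the set $W$ of $\B$-states visited since the last $\leftrightnet$-letter. Formally, on a $\leftrightnet$-letter $\sigma$ the map $\delta$ resets $W$ to $\{q,\delta_\B(q,\sigma)\}$, on a $\leftnet$-letter it sets $W\mapsto W\cup\{\delta_\B(q,\sigma)\}$, and $q$ always evolves as in $\B$; the distinction is available since $\Nmodel=\{\leftrightnet,\leftnet\}$ is encoded in the signal. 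Hence $\delta(\iota,\outcome{\profile_t}{w})$ records both the $\B$-state reached after $w$ and the $\B$-states visited in the last block. As stated in the lemma, the acceptance $\Acc$ of $\A$ is irrelevant and may be taken arbitrary.

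The \RTA $\T=(\tStates,\tinit,\tTrans,\tAcc)$ then has $\tStates$ the reachable part of $S$, $\tinit=(\iota_\B,\{\iota_\B\})$, and a singleton label alphabet. For every state $(q,W)$ and every label $\tlab=(\tlab_1,\tlab_2)\in\Pairs$ that is \emph{tail-good} from $q$ — meaning that every purely $\leftnet$-prolongation $v\in\Signals_\leftrightnet\Signals_\leftnet^\omega$ gives a $\B$-accepting run of $\outcome{\tlab}{v}$ from $q$ — I put into $\tTrans$ the transition sending each direction $d\in\Dir$ to $\delta((q,W),\outcome{\tlab}{d})$. The tail-good requirement is exactly what forces every input with \emph{finitely many} resynchronizations, which branches off the tree into an infinite $\leftnet$-tail at some node, to yield an accepting execution. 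For inputs with \emph{infinitely many} resynchronizations, which are the genuine infinite branches $d_0d_1\cdots\in\Dir^\omega$, I define $\tAcc$ by lifting each pair $(F,F')\in\Acc_\B$ to $(\{(q,W):W\cap F\neq\emptyset\},\{(q,W):W\cap F'\neq\emptyset\})$; since each block is finite, a $\B$-state occurs infinitely often along the execution iff it lies in the block-set $W$ of infinitely many nodes of the branch, so a branch is Rabin-accepting in $\T$ iff the corresponding execution lies in $L(\B)=L(\varphi)$. Fixing a canonical label for each transition (when several labels induce the same successor map), runs of $\T$ correspond to strategy trees, a run is accepting iff the associated $\profile_t$ fulfills $\varphi$, and in particular $L(\T)\neq\emptyset$ iff (1) holds.

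Finally, assuming (1) we obtain $L(\T)\neq\emptyset$, so Lemma~\ref{lem::rationalrun} yields a rational accepting run $\rho$. Reading labels off $\rho$ produces a strategy tree $t$ with $\profile_t$ fulfilling $\varphi$, and by construction $\rho(w)=\delta(\iota,\outcome{\profile_t}{w})$. Rationality then gives $t(w)=t(w')$ whenever $\rho(w)=\rho(w')$, i.e.\ whenever $\delta(\iota,\outcome{\profile_t}{w})=\delta(\iota,\outcome{\profile_t}{w'})$, which is exactly (b). I expect the main obstacle to be the acceptance bookkeeping of $\T$: one must simultaneously (i) express Rabin acceptance of the \emph{full} execution through a condition on the \emph{block-boundary} states alone, which forces the $2^Q$ augmentation of $\A$, and (ii) cover the finitely-resynchronizing inputs, which are invisible to the branch acceptance, via the tail-good side condition; and crucially the granularity of the augmented $\delta$ must match \emph{exactly} the equivalence ``$\rho(w)=\rho(w')$'' delivered by rationality, since otherwise the conclusion (b) would be either too weak to use or impossible to attain.
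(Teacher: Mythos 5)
Your proposal is correct and follows essentially the same route as the paper's proof: the same subset-augmented \DRWA with state space $S_\phi \times 2^{S_\phi}$ tracking the states visited since the last $\leftrightnet$-letter, a tree automaton whose transitions unfold $\delta$ along directions (the paper's condition (T1)) with your ``tail-good'' requirement matching the paper's condition (T2) for inputs with finitely many resynchronizations, the same lifted Rabin pairs over block-sets, and the same final step of applying Lemma~\ref{lem::rationalrun} to the input-free automaton and fixing one canonical label per transition to read off a finite-memory strategy tree satisfying (b). The only deviations are cosmetic bookkeeping: the paper resets the block-set to $\{\delta_\phi(\sphi,\alpha)\}$ on a $\leftrightnet$-letter (you also include the boundary state, which is harmless), and it first defines a $\Pairs$-labeled \RTA $\T_\varphi$ and then projects it to an input-free one, whereas you build the input-free automaton directly with the label existence folded into the transition relation.
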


\newcommand{\flag}{\mathit{flag}}
\newcommand{\inpout}{\alpha}

\begin{proof}
Let $\varphi \in \LTL{\Nmodel}$ be the given formula.
We first define $\A$ and then prove its correctness
in terms of the statement of Lemma~\ref{lem::finmemorytrees}
using an \RTA $\T_\phi$ over strategy trees.

\subparagraph*{The \DRWA $\boldsymbol{\A}$.}

It is well known that there is a \DRWA
$\A_\phi = (S_\phi,\iota_\phi,\delta_\phi,\Acc_\phi)$ over
$\Signals \times \Outputs$,
with doubly exponentially many states and exponentially many acceptance pairs,
such that $L(\A_\phi) = L(\phi)$ (cf.\ \cite{VardiW94,Safra88}).
We refer to states of $\A_\phi$ by $\sphi \in S_\phi$.

Starting from $\A_\phi$, we now define the \DRWA
$\A = (S,\iota,\delta,\Acc)$ such that,
for words that contain infinitely many $\leftrightnet$, it is enough
to look at the sequence of states reached by $\A$
right before the $\leftrightnet$-positions
to determine whether the word is in $L(\A_\phi)$ or not.
The idea is to keep track of the set of states that are taken between two $\leftrightnet$-positions.
Accordingly, the set of states is $S = S_\phi \times 2^{S_\phi}$, with initial state $\iota = (\iota_\phi,\emptyset)$.
Concerning the transitions, for $(\sphi,R) \in S$ and $\inpout=(\inpsymb{x_1}{\dummynetw}{x_2},(y_1,y_2)) \in \Signals \times \Outputs$, we let
\[\delta((\sphi,R),\inpout)
=
\begin{cases}
(\delta_\phi(\sphi,\inpout), \{\delta_\phi(\sphi,\inpout)\} \cup R)     & \textup{if } {\dummynetw} = {\leftnet}\\
(\delta_\phi(\sphi,\inpout), \{\delta_\phi(\sphi,\inpout)\})     & \textup{if } {\dummynetw} = {\leftrightnet}\,.
\end{cases}\]
Finally, the acceptance condition is given by ${\Acc} = \{ (G_F,G_{F'}) \mid (F,F') \in \Acc_\phi\}$ where
$G_F = \{(\sphi,R) \in S \mid F \cap R \neq \emptyset\}$ and
$G_{F'} = \{(\sphi,R) \in S \mid F' \cap R \neq \emptyset\}$.

The following claim states that $\A$ is correct wrt.\
executions with infinitely many synchronization points,
while the acceptance condition is looking only at
states reached right before these synchronizing points:

\begin{claim}\label{claim:bigsteps}
Let
$w_0,w_1,w_2,\ldots \in
(\Signals_\leftrightnet \times \Outputs)(\Signals_\leftnet \times \Outputs)^\ast$.
Moreover, let $w = w_0w_1w_2 \ldots$ be the concatenation of all $w_i$.
Set $\vecs_0 = \iota$ and, for $i \in \N$, $\vecs_{i+1} = (\sphi_{i+1},R_{i+1}) = \delta(\iota,w_0 \ldots w_{i})$.
Then,
$w \in L(\A_\phi)
~\Longleftrightarrow~$
the sequence
$\vecs_0, \vecs_1, \vecs_2, \ldots$ satisfies ${\Acc}
~\Longleftrightarrow~ w \in L(\A)$.
\end{claim}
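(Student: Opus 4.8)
The plan is to prove the two biconditionals in the claim by relating the run of the original automaton $\A_\phi$ on $w$ to the run of the aggregated automaton $\A$ on the ``big-step'' factorization $w_0w_1w_2\ldots$. The key structural observation is that, by construction of $\delta$, the first component of $\A$ simply simulates $\A_\phi$ (i.e.\ $\sphi_i$ is exactly the state of $\A_\phi$ reached after reading $w_0\ldots w_{i-1}$, which ends right before a $\leftrightnet$-letter), while the second component $R_i$ accumulates the set of $\A_\phi$-states visited \emph{strictly within} the block $w_{i-1}$. Since each block $w_i$ begins with a letter from $\Signals_\leftrightnet\times\Outputs$, the reset case of $\delta$ guarantees that $R_i$ records precisely the set of states that $\A_\phi$ passes through while reading $w_{i-1}$ (including the state reached at the synchronization point opening $w_{i-1}$, but forgetting earlier history). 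I would state this simulation correspondence as a small auxiliary fact and verify it by a routine induction on $i$.

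For the first biconditional, I would show that the set of states visited infinitely often by $\A_\phi$ along the full run on $w$ coincides, as far as the Rabin condition is concerned, with the information carried by the sequence $\vecs_0,\vecs_1,\ldots$. Concretely, let $\mathrm{Inf}$ denote the set of $\A_\phi$-states occurring infinitely often in the run of $\A_\phi$ on $w$. Because there are only finitely many blocks' worth of states forgotten and infinitely many synchronization points, a state $\sphi$ lies in $\mathrm{Inf}$ if and only if $\sphi$ appears in $R_i$ for infinitely many $i$. Given a Rabin pair $(F,F')\in\Acc_\phi$, the original acceptance ``$\mathrm{Inf}\cap F\neq\emptyset$ and $\mathrm{Inf}\cap F'=\emptyset$'' then translates, using the definitions $G_F=\{(\sphi,R)\mid F\cap R\neq\emptyset\}$ and $G_{F'}=\{(\sphi,R)\mid F'\cap R\neq\emptyset\}$, exactly into ``$\vecs_i\in G_F$ infinitely often and $\vecs_i\in G_{F'}$ only finitely often.'' This is the heart of the argument: a state is seen infinitely often by $\A_\phi$ iff it recurs across the block-summaries $R_i$, so membership in the ``good'' set $F$ becomes infinitary recurrence of $G_F$, and membership in the ``bad'' set $F'$ becomes infinitary recurrence of $G_{F'}$. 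I expect the main subtlety here to be the $F'$ direction: one must argue that if some $f'\in F'$ is visited infinitely often by $\A_\phi$, then it is captured in infinitely many $R_i$, and conversely that the finitely-often condition on $G_{F'}$ forces all bad states to eventually stop recurring. This is where the finiteness of blocks (each $w_i$ is finite, guaranteed by the factorization into $\Signals_\leftrightnet$-headed blocks) is used implicitly, together with the hypothesis that there are infinitely many synchronization points.

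For the second biconditional, the sequence $\vecs_0,\vecs_1,\ldots$ satisfying $\Acc$ is by definition the statement that some $G_F$ is visited infinitely often and the corresponding $G_{F'}$ only finitely often; comparing this with the Rabin acceptance of $\A$ on the full word $w$ requires checking that $\A$, reading $w$ letter-by-letter, has the same infinitary behavior on its state sequence as the subsequence $\vecs_0,\vecs_1,\ldots$ taken at synchronization points. Since every block is finite and there are infinitely many synchronization points, the states of $\A$ occurring infinitely often are determined by those $\vecs_i$ occurring infinitely often, because within a block the second component only grows and is reset at the next synchronization point; thus the ``interior'' states of $\A$ along a block do not contribute new infinitely-recurring $G_F$/$G_{F'}$ membership beyond what the $\vecs_i$ already witness. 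I would close the argument by noting the definitions of $\Acc$ and of the run of $\A$ align verbatim once this reduction to synchronization points is made, giving $w\in L(\A)$. The only place demanding care is confirming that no $G_{F'}$-state sneaks in infinitely often strictly inside blocks without being recorded in some $R_i$, which again follows from the accumulation rule for the second component.
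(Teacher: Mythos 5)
Your proof is correct and follows what is essentially the only natural route, which is also the paper's (the paper defers the details to its long version): the first component of $\A$ simulates $\A_\phi$, the second component $R_i$ records exactly the $\A_\phi$-states visited while reading block $w_{i-1}$, and both biconditionals then follow from finiteness of the blocks, a pigeonhole argument over the finite sets $F,F'$, and the fact that the $R$-component grows monotonically within a block. In particular, you correctly isolate the one delicate point: any $G_F$- or $G_{F'}$-state visited strictly inside a block propagates its membership to the block-end state $\vecs_{i+1}$ (since its $R$-component is contained in $R_{i+1}$), so interior visits of the run of $\A$ never affect acceptance beyond what the sequence $\vecs_0,\vecs_1,\ldots$ already witnesses.
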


\subparagraph*{The \RTA $\boldsymbol{\cal T_\varphi}$.}

To get finite-memory algorithms,
we will rely on Lemma~\ref{lem::rationalrun}, which is based on
tree automata. In fact, a crucial ingredient of the proof is an \RTA $\cal T_\varphi$
over $\Pairs$-labeled $D$-trees such that
\begin{center}
$L(\T_\varphi) = \{\,t \mid t$ is a strategy tree such that $f_t$ fulfills $\varphi\}$.
\end{center}
It is defined by $\mathcal{T}_\varphi = (S, \iota,\Delta,\Acc)$
where $S$, $\iota$, and $\Acc$ are taken from $\A$, and
$\Delta$ is given by
\begin{center}
$\begin{array}{rcl}
\Delta =
\mleft\{(\vecs = (\sphi,R),\tlab,(\vecs_d)_{d \in \Dir})
~\middle|~
\begin{array}{lr}
\vecs_d = \delta(\vecs,\outcome{\tlab}{d}) \text{ for all } d \in \Signals_\leftrightnet\Signals_\leftnet^\ast & \textup{(T1)}\\[1ex]
\outcome{\tlab}{w} \in L(\A_\phi[\sphi]) \text{ for all } w \in \Signals_\leftrightnet\Signals_\leftnet^\omega & \textup{(T2)}
\end{array}
\mright\}\,.
\end{array}$
\end{center}
Here, $\A_\phi[\sphi] = (S_\phi,\sphi,\delta_\phi,\Acc_\phi)$ is the
automaton $\A_\phi$ but where $\iota_\phi$ has been replaced
by $\sphi$ as the initial state.
While condition (T1) ``unfolds'' $\A$ into the tree structure taking
care of input sequences with infinitely many synchronization points,
condition (T2) guarantees that the distributed algorithm behaves
correctly should there be no more synchronization.

The proof of correctness of $\T_\varphi$, which relies on
Claim~\ref{claim:bigsteps}, can be found in \cite{abs-2002-07545}.

\subparagraph*{Putting It Together.}

We now obtain Lemma~\ref{lem::finmemorytrees} as a corollary from
Lemma~\ref{lem::rationalrun} using $\T_\phi$.

Direction (2) $\Longrightarrow$ (1) is trivial.
Let us show (1) $\Longrightarrow$ (2) and
suppose $L(\T_\varphi) \neq \emptyset$.
Consider the input-free \RTA
$\T_\phi' = (S, \iota,\Delta',\Acc)$
obtained from $\T_\phi$ by replacing the transition relation with
$\Delta' = \{(\stuple,(\stuple_d)_{d \in \Dir}) \mid (\stuple,\tlab,(\stuple_d)_{d \in \Dir}) \in \Delta\}$.
Note that $L(\T_\varphi') \neq \emptyset$. By Lemma~\ref{lem::rationalrun}, there is
an accepting run $\rho$ of $\T_\phi'$ such that, for all $w,w' \in \Dir^\ast$
with $\rho(w) = \rho(w')$, we have $\transat{\rho}{w} = \transat{\rho}{w'}$.
For all transitions $\theta = (\stuple,(\stuple_d)_{d \in \Dir}) \in \Delta'$,
fix $\tlab^\theta \in \Lambda$ such that $(\stuple,\tlab^\theta,(\stuple_d)_{d \in \Dir}) \in \Delta$.
Let $t: \Dir^\ast \to \Lambda$ be the strategy tree defined by $t(w) = \tlab^{\transat{\rho}{w}}$.

We have $t \in L(\T_\varphi)$. Therefore, $f_t$ fulfills $\varphi$,
i.e., (2a) holds. It remains to show (2b).
Let $w,w' \in \Dir^\ast$ with
$\delta(\iota,\outcome{\profile_{t}}{w}) =
\delta(\iota,\outcome{\profile_{t}}{w'})$.
By induction,
we can show that
$\rho(w) =
\delta(\iota,\outcome{\profile_{t}}{w}) =
\delta(\iota,\outcome{\profile_{t}}{w'}) =
\rho(w')$, i.e., $t(w) = t(w')$, which
proves (2b).
Indeed, $\delta(\iota,\outcome{\profile_{t}}{\epsilon}) = \iota = \rho(\epsilon)$ and,
for $u \in \Dir^\ast$ and $d \in \Dir$, we have
$\delta(\iota,\outcome{\profile_{t}}{ud})
=
\delta(\iota,\outcome{\profile_{t}}{u} \cdot \outcome{\tlab^u}{d})
=
\delta(\delta(\iota,\outcome{\profile_{t}}{u}),\outcome{\tlab^u}{d})
=
\delta(\rho(u),\outcome{\tlab^u}{d})
=
\rho(ud)$.
The last equation is by (T1) in the definition of the transition relation
$\Delta$ of $\T_\phi$.
\end{proof}


\section{From Finite-Memory Distributed Algorithms to Games}
\label{sec:games}

\subsection{Games with Imperfect Information}

The existence of finite-memory distributed algorithms shown in
Section~\ref{sec:onedirectional-model} paves the way for
a reduction of the synthesis problem to $(2,1)$-player games
with imperfect information, where two players form
a coalition against an environment in order to fulfill some objective.
The main differences between games and the synthesis problem are
twofold: Games are played in an arena, on a finite set of nodes (or
states), while the input of the synthesis problem is a logical
specification.  More importantly, in a game, communication between
players occurs implicitly, by observing the nodes that are visited.
Hence, communication between players is bounded by the finite nature
of the arena, whereas in the synthesis problem, processes can send an
unbounded amount of information at each communication point.  Recall
that $\Procs = \{1,2\}$ is the set of processes. In the context of
games, however, its elements are referred to as \emph{players}.

\begin{definition}\label{def:game}
  A \emph{$(2,1)$-player game} is
  a tuple
  $\mathcal{G} =
  (V,v_0,\Winning,\EActions,(\Actions_\proc,\Obs_\proc,\obsf_\proc)_{
    \proc \in \Procs},\GTrans)$.
  Here, $V$ is the finite set of nodes
  containing the initial node $v_0 \in V$.  We assume a Rabin winning condition $\Winning \subseteq
  2^V \times 2^V$. Moreover, $\EActions$ is the finite set of actions
  of the environment, $\Actions_\proc$ is the finite set of actions of
  player $\proc$, $\Obs_\proc$ is the finite set of observations of
  $\proc$, and $\obsf_\proc: V \times \EActions \to \Obs_\proc$
  determines what $\proc$ actually observes for a given node and
  environment action. Finally, $\GTrans: V \times \EActions \times
  (\Actions_1 \times \Actions_2) \to V$ is the transition function.
\end{definition}

The game proceeds in rounds $\round \in \N$, the first round
starting in $v_0$. When a round starts in $v \in V$,
the environment first chooses an action $\eact \in \EActions$.
Players 1 and 2 do not see $\eact$, but only
$\obsf_1(v,\eact)$ and $\obsf_2(v,\eact)$, respectively.
Once the players receive these observations,
they simultaneously choose actions
$a_1 \in \Actions_1$ and $a_2 \in \Actions_2$.
The next state is $\GTrans(v,\eact,(a_1,a_2))$, etc.

Accordingly, a \emph{play} (starting from $v_0$) is a sequence
$\pi = (v_0,\eact_0)(v_1,\eact_1) \ldots \in (V \times \EActions)^\omega$ such that,
  for all $\round \in \N$, there is $(a_1,a_2) \in \Actions_1 \times \Actions_2$ such that
  $v_{\round+1} = \GTrans(v_\round,\eact_\round,(a_1,a_2))$.
The observation that a player $\proc$ collects in play $\pi$ until round $\round$ is
defined as $\obsgame{\proc}{(v_0,\eact_0) \ldots (v_\round,\eact_\round)} = \obsf_\proc(v_0,\eact_0) \ldots \obsf_\proc(v_\round,\eact_\round) \in \Obs_\proc^\ast$.
The play is \emph{winning} (for the coalition of players 1 and 2)
  if $v_0v_1v_2 \ldots$ satisfies the Rabin winning condition in the expected manner.

A \emph{strategy} for player $\proc$ is a mapping $g_\proc: \Obs_\proc^+ \to \Actions_\proc$.
A \emph{strategy profile} is a pair $g = (g_1,g_2)$ of strategies.
We say that play $\pi = (v_0,\eact_0)(v_1,\eact_1) \ldots$ is \emph{compatible} with $g$ if,
  for all $\round \in \N$, we have $v_{\round+1} = \GTrans(v_\round,\eact_\round,(a_1^\round,a_2^\round))$ where
  $a_\proc^\round = g_\proc(\obsgame{\proc}{(v_0,\eact_0) \ldots (v_\round,\eact_\round)})$.
Strategy profile $g$ is \emph{winning} if all
  plays that are compatible with $g$ are winning.

The following fact has been shown by Peterson and Reif \cite{peterson1979multiple} for games and corresponds to the
  undecidability result of Pnueli and Rosner \cite{pnueli1990distributed} for two processes without
  communication.

\begin{fact}[Peterson-Reif]
The following problem is undecidable:
Given a $(2,1)$-player game $\mathcal{G}$,
is there a winning strategy profile?
\end{fact}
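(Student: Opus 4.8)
The final stated result is the Peterson--Reif Fact: the problem of deciding, given a $(2,1)$-player game $\mathcal{G}$, whether there exists a winning strategy profile $g = (g_1,g_2)$ for the coalition against the environment, is undecidable.

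My plan is to reduce directly from the Pnueli--Rosner undecidability (Fact~\ref{fact:pnueli-rosner}), exploiting the remark the authors make that the $(2,1)$-game model with imperfect information ``corresponds to'' the two-process synthesis problem with no communication link, i.e.\ $\Synthesis{\{\emptynet\}}$. First I would establish the intended dictionary between the two formalisms: a distributed algorithm $\strat = (\strat_1,\strat_2)$ with $\strat_p : \Hist{p} \to \Outp{p}$ becomes a strategy profile $g = (g_1,g_2)$, a signal history becomes a play, and the environment's choice of input signal becomes the sequence of environment actions $\eact_0\eact_1\ldots$. The key feature exploited is that under $\emptynet$ neither process ever sees the other's input, which is exactly modelled in a game by choosing the observation functions $\obsf_1,\obsf_2$ so that player $p$'s observation reveals only its own local input and never the environment action that determines the other player's input.

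The central step is the actual construction. Given an instance $\varphi \in \LTL{\{\emptynet\}}$ of $\Synthesis{\{\emptynet\}}$, I would build a game $\mathcal{G}_\varphi$ as follows. Take $\A_\varphi$, the $\DRWA$ over $\Signals \times \Outputs$ with $L(\A_\varphi) = L(\varphi)$ (the authors use such automata later, e.g.\ in the proof of Lemma~\ref{lem::finmemorytrees}, so I may assume it exists). Let the set of nodes $V$ be the state set of $\A_\varphi$ together with bookkeeping needed to present a fresh input to the players, and let $v_0$ be its initial state. The environment action set $\EActions$ encodes the joint input $(x_1,x_2) \in \Inpp{1} \times \Inpp{2}$ for the current round; player $p$'s action set is $\Actions_p = \Outp{p}$; and the observation functions are $\obsf_p(v,\eact) = x_p$, so that each player sees exactly its own input and nothing of the opponent's --- faithfully encoding the $\emptynet$ link. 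The transition function $\GTrans(v,\eact,(y_1,y_2))$ advances $\A_\varphi$ by reading the letter $\bigl(\inpsymb{x_1}{\emptynet}{x_2},(y_1,y_2)\bigr)$, and the Rabin winning condition $\Winning$ is lifted from $\Acc_\varphi$. The correctness argument is then a bijection between strategy profiles and distributed algorithms that matches winning plays with executions in $L(\varphi)$: since each player's observation sequence $\obsgame{p}{\cdot}$ coincides with that player's view sequence, $g_p$ and $\strat_p$ carry the same information, and a play is winning iff the corresponding execution $\outcome{\strat}{w}$ lies in $L(\varphi)$. Hence $g$ is winning for all environment behaviours iff $\strat$ fulfills $\varphi$, and $\varphi$ is realizable iff $\mathcal{G}_\varphi$ has a winning strategy profile.

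The main obstacle, and the step deserving the most care, is ensuring the imperfect-information structure in the game exactly mirrors the informational isolation of the two processes under $\emptynet$: I must verify that no information about $x_2$ can leak to player $1$ through the observed \emph{nodes} of the arena (and symmetrically for player $2$), since in a game communication happens implicitly via the visited states, as the authors stress. The clean way to guarantee this is to keep the automaton state hidden from the players and to route all input delivery exclusively through the per-player observation functions $\obsf_p$, so that the observations a player collects are in information-theoretic bijection with that player's view history; a leak here would make the game \emph{easier} to win than the synthesis instance and break the reduction. Once this isolation is secured, decidability of the game would contradict Fact~\ref{fact:pnueli-rosner}, yielding undecidability. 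I would also remark that this reduction is the natural game-theoretic counterpart to Lemma~\ref{lemma:propagation}, and that the converse direction --- encoding an arbitrary $(2,1)$-game back into synthesis --- is not needed for the undecidability conclusion, though it can be carried out symmetrically to justify the word ``corresponds'' in the statement.
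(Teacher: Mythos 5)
Your reduction is correct, but it is worth being clear that the paper does not prove this statement at all: it is imported as a known result, with a citation to Peterson and Reif's 1979 paper, and the sentence preceding it merely \emph{remarks} that the fact ``corresponds to'' Pnueli--Rosner undecidability for two non-communicating processes. What you have done is turn that informal remark into an actual proof: you reduce $\Synthesis{\{\emptynet\}}$ (Fact~1, itself a citation) to the $(2,1)$-game problem, by compiling the \DRWA $\A_\varphi$ for $L(\varphi)$ into the arena, letting $\EActions = \Inpp{1}\times\Inpp{2}$, $\Actions_p = \Outp{p}$, $\obsf_p(v,\eact)=x_p$, and lifting $\Acc_\varphi$ to the winning condition. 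The key points all check out: since $\obsf_p$ ignores the node $v$, no state information leaks, so player $p$'s observation histories are in bijection with the $\emptynet$-views in $\Hist{p}$ (both are just $\Inpp{p}^+$); the game's transition function is deterministic as the paper's definition requires; and because $\A_\varphi$ is deterministic, the node sequence of the unique play compatible with $(g_1,g_2)$ under a given environment sequence is exactly the run of $\A_\varphi$ on the corresponding execution, so the Rabin condition holds iff the execution is in $L(\varphi)$. Hence realizability of $\varphi$ coincides with existence of a winning profile, and decidability of the game problem would contradict Fact~1. Your bookkeeping states are actually unnecessary ($V = S_\varphi$ suffices, since $\obsf_p$ and $\GTrans$ already take $\eact$ and the actions as arguments), but this is harmless. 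The trade-off between the two approaches: the paper's citation appeals to Peterson--Reif's original, self-contained undecidability argument for games of incomplete information, whereas your proof is shorter and internal to the paper's framework but is only as strong as Fact~1, which the paper also leaves uncited-in-detail; either is a legitimate way to discharge the statement, and yours has the added value of making precise the ``correspondence'' the paper only gestures at --- indeed it is essentially the degenerate ($\Nmodel=\{\emptynet\}$, no-information) case of the translation the paper later carries out in the other direction in Lemma~\ref{lem::gamereduction}.
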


Therefore, we have to impose a restriction.
It turns out that, when we translate the synthesis problem
for $\Nmodel=\{\leftrightnet,\leftnet\}$ to games in Section~\ref{sec:reduction-games},
player 1 (who corresponds to process 1) will have perfect information.
We say that player $\proc$ has \emph{perfect information} in $\mathcal{G}$
if $\Obs_p = V \times \EActions$ and $\obsf_\proc$ is the identity function.

The following result is by van der Meyden and Wilke~\cite[Theorem 6]{van2005synthesis}
  with a proof in \cite[Theorem~1]{van2005synthesis:TR}.

\begin{fact}[van der Meyden-Wilke]\label{fact:vanMW}
The following problem is decidable:
Given a $(2,1)$-player game $\mathcal{G}$ such that player 1 has perfect information,
is there a winning strategy profile?
\end{fact}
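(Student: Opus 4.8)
The plan is to exploit the information hierarchy — player~1 sees strictly more than player~2 — to collapse the game into a \emph{perfect-information} Rabin game on ``knowledge sets'', whose state space is finite (exponential in $V$) and then invoke the decidability and positional determinacy of finite Rabin games. The starting observation is that, since player~1 has perfect information ($\Obs_1 = V \times \EActions$ and $\obsf_1 = \mathrm{id}$), a winning profile exists if and only if there is a strategy $g_2 : \Obs_2^+ \to \Actions_2$ for player~2 such that, once $g_2$ is fixed, player~1 can win the resulting perfect-information game against the environment. The only genuinely imperfect information left is that of player~2, so the task reduces to an imperfect-information game with a \emph{single} uninformed player, a setting amenable to the classical powerset construction of Reif.

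First I would introduce, for player~2, the subset (knowledge) construction. After player~2 has collected an observation history $o_0 \ldots o_r \in \Obs_2^+$, the set of game nodes that could be the true current node — ranging over all environment and player-1 behaviors consistent with that observation history and with player~2's own past actions — forms a \emph{knowledge set} $K \subseteq V$. Because player~1 distinguishes these histories (it has full information), it may commit to a \emph{separate} action $a_1^{(v)}$ for each $v \in K$, whereas player~2 must commit to a single $a_2$ for all of $K$. The successor knowledge sets are then indexed by player~2's next observation: for each $o \in \Obs_2$, the $o$-successor collects all nodes $\GTrans(v,\eact,(a_1^{(v)},a_2))$ with $v \in K$ and $\obsf_2(v,\eact) = o$, the environment action $\eact$ being universally quantified since the coalition must win whatever it does.

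Next I would turn this into a finite perfect-information Rabin game, equivalently into a nonemptiness test for a Rabin \RTA over $\Actions_2$-labeled $\Obs_2$-trees, each tree encoding a candidate $g_2$. A position is a knowledge set enriched with per-element memory; the coalition chooses the label $a_2$ together with the vector $(a_1^{(v)})_{v \in K}$, and the environment resolves the remaining nondeterminism by selecting, adversarially, a next observation class. A \emph{play} of the original game corresponds to a path through the successor knowledge sets together with a distinguished \emph{thread} $v_0, v_1, v_2, \ldots$ running through them (the actual node sequence), and the profile is winning precisely when every such thread satisfies $\Winning$. Since the space of knowledge sets is finite and the coalition's information in the derived game is complete, it is a finite perfect-information game, decidable once equipped with the correct acceptance condition.

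The main obstacle is exactly that acceptance condition: lifting $\Winning$ so that it expresses ``\emph{every} thread through the evolving knowledge sets is Rabin-accepting'', in the presence of threads that \emph{split} (one node, several hidden environment actions) and \emph{merge} (distinct nodes yielding the same observation and node). A naive ``set of infinitely-visited nodes'' summary is insufficient, because merging loses the identity of threads. I would resolve this by first converting the Rabin condition to a parity condition via an index-appearance-record construction, then attaching such a record to each element of the knowledge set and tracking it through the subset dynamics; a further Safra-style determinization over the subset yields a single Rabin condition on the derived game that holds if and only if all threads are good. With this condition in place, the derived game is a finite perfect-information Rabin game, whose solvability — and positional determinacy, as already used in Lemma~\ref{lem::rationalrun} via \cite{Klarlund94} — gives decidability of the original problem.
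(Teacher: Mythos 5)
Your overall architecture — fix $g_2$ as an $\Actions_2$-labeled $\Obs_2$-tree, run a subset construction for the single uninformed player, handle the ``all threads accepting'' condition by per-element records plus a Safra-style determinization, and solve the resulting finite perfect-information game — is a legitimate route to this fact (which, for the record, the paper does not prove: it quotes van der Meyden--Wilke and only remarks that, since $\GTrans$ is deterministic, an EXPTIME bound follows via a tree automaton for global strategies combined with an alternating tree automaton for the conforming local strategies of player 2). However, there is a concrete flaw in your knowledge-set dynamics: an off-by-one in \emph{when} actions are committed relative to observations. In the game of Definition~\ref{def:game}, each round the environment first chooses $\eact$, the players then receive $\obsf_1(v,\eact)$ and $\obsf_2(v,\eact)$, and \emph{only then} choose their actions; strategies have type $\Obs_\proc^+ \to \Actions_\proc$, so the round-$\round$ action depends on the round-$\round$ observation. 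In your derived game, the coalition commits $a_2$ and the vector $(a_1^{(v)})_{v \in K}$ at the knowledge set \emph{before} the environment resolves the observation class, and moreover $a_1^{(v)}$ depends only on the node $v$ and not on $\eact$, even though player 1 observes the full pair $(v,\eact)$. Both restrictions strictly weaken the coalition, so your reduction is sound but not complete: consider a game where the play reaches a winning sink iff $a_1 = \eact$ in the current round; player 1, having perfect information, wins trivially by echoing $\eact$, but in your derived game the pre-committed $a_1^{(v)}$ can be falsified by the environment. The construction as stated would therefore misclassify realizable instances. The repair is routine and in fact matches your own \RTA{} phrasing: take knowledge-set elements to be \emph{pending pairs} $(v,\eact)$ consistent with the observation history, let the direction $o$ be resolved before the coalition acts, read $a_2$ off the tree label at node $o_0 \ldots o_\round$ (so that $a_2 = g_2(o_0 \ldots o_\round)$ correctly sees the latest observation), and index player 1's choices as $(a_1^{(v,\eact)})$ over the pairs in the current knowledge set.

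Two smaller points. First, the step where player 1 ``may commit to a separate action per element'' is a genuine uniformization claim that needs justification: a priori $g_1$ depends on the full history, and two histories with the same player-2 observation sequence and the same pending pair may receive different actions under a given winning $g_1$. The justification is available — for fixed $g_2$, player 1 faces a perfect-information Rabin game on the (infinite) product of the arena with the tree of $g_2$, and memoryless determinacy for the Rabin player \cite{Klarlund94} yields a winning strategy depending only on (observation history of player 2, pending pair), which is exactly vector form — but you invoke positional determinacy only at the very end, for the finite derived game, whereas it is also needed here. Second, on the acceptance condition: determinizing the universal ``every thread is Rabin-accepting'' condition naturally produces a deterministic \emph{Streett} condition (determinize the existential violating-thread automaton and complement); getting back to Rabin costs one more index-appearance record. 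This is standard and harmless — finite perfect-information Streett games are decidable anyway — but your sketch should not assert a Rabin condition falls out of the Safra step directly.
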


Note that the transition function of our game
is deterministic so that we actually
obtain decidability in exponential time exploiting a standard technique:
We use a small tree automaton to represent the \emph{global} (full information)
winning strategies and another small alternating tree automaton for the
local ones of player 2 that conform with some global strategy.
The alternating automaton can be checked for nonemptiness
in exponential time.

\subsection{Reduction to Games}
\label{sec:reduction-games}

The analogies between synthesis and games
suggest a natural translation of the former into the latter.
However, the crucial difference being the access to histories,
we rely on the fact that certain histories in distributed
algorithms enjoy a finite abstraction. In fact, it is
enough to reveal a bounded amount of information to player 2 at every
environment action from $\Signals_\leftrightnet$.

\begin{lemma}\label{lem::gamereduction}
Let $\varphi \in \LTL{\Nmodel}$ with $\Nmodel=\{\leftrightnet,\leftnet\}$.
We can effectively construct a $(2,1)$-player game $\mathcal{G}_{\phi}$ such that
player 1 has perfect information and the following holds:
There is a distributed algorithm that fulfills $\varphi$ iff
there is a winning strategy profile in $\mathcal{G}_{\phi}$.
\end{lemma}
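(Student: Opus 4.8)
The plan is to turn the synthesis problem for $\Nmodel=\{\leftrightnet,\leftnet\}$ into a $(2,1)$-player game in which the only unbounded object of the synthesis setting---the causal history a process carries---is replaced by the bounded state of the automaton $\A$ from Lemma~\ref{lem::finmemory}. Concretely, I would take the nodes of $\mathcal{G}_\phi$ to be the states $S = S_\phi \times 2^{S_\phi}$ of $\A$, with initial node $\iota$. An environment action is an input signal $\inpsymb{x_1}{\dummynet}{x_2}\in\Signals$ (the adversary freely picks the link $\dummynet\in\{\leftrightnet,\leftnet\}$ and both inputs), player~$1$'s actions are $\Outp{1}$ and player~$2$'s actions are $\Outp{2}$, and the transition advances the automaton, $\GTrans\bigl((\sphi,R),\inpsymb{x_1}{\dummynet}{x_2},(y_1,y_2)\bigr)=\delta\bigl((\sphi,R),(\inpsymb{x_1}{\dummynet}{x_2},(y_1,y_2))\bigr)$. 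Player~$1$ is given perfect information, matching the fact that in $\{\leftrightnet,\leftnet\}$ process~$1$ always sees $x_2$. Player~$2$'s observation function is the crux: at a $\leftnet$-round it reveals only $\inpsymb{\bot}{\leftnet}{x_2}$ (process~$2$'s own input and the link), hiding both $x_1$ and the current node, whereas at a $\leftrightnet$-round it additionally reveals the current node $(\sphi,R)$ (the automaton state reached just before the synchronization) together with the full signal $\inpsymb{x_1}{\leftrightnet}{x_2}$. Thus, inside a block player~$2$'s sequence of observations is exactly process~$2$'s view $u\in\Signals_\leftrightnet\Signals_{\bot\leftnet}^\ast$, augmented at the synchronization point by the bounded state on which, by Lemma~\ref{lem::finmemory}, the entire block behavior depends.

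For the winning condition I would \emph{not} reuse the acceptance condition of $\A$, since Claim~\ref{claim:bigsteps} only characterizes membership in $L(\A_\phi)$ for executions with infinitely many synchronizations; a play where the environment eventually plays $\leftnet$ forever would be misjudged, because the accumulator $R$ never resets and so conflates states seen only finitely often with recurrent ones. Instead I would lift $\A_\phi$'s Rabin condition through the $\sphi$-projection, i.e.\ use $\{(\widehat{F},\widehat{F'})\mid (F,F')\in\Acc_\phi\}$ with $\widehat{F}=\{(\sphi,R)\in S\mid \sphi\in F\}$. As the $\sphi$-component of the node sequence is precisely the run of $\A_\phi$ on the corresponding execution, this makes a play winning if and only if its execution lies in $L(\phi)$, uniformly for finitely and infinitely many synchronizations.

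The equivalence is then proved in both directions, and I expect the consistency of player~$2$'s information to be the main obstacle. For the ``$\Rightarrow$'' direction I would start from a finite-memory algorithm $\profile=(\strat_1,\strat_2)$ granted by Lemma~\ref{lem::finmemory}: player~$1$ replays $\strat_1$ (it can reconstruct process~$1$'s full view from its observations), and player~$2$ plays $\strat_2$, which is legitimate precisely because $\strat_2$'s output on a block depends only on $(\sphi,R)$ and on $u$, both contained in player~$2$'s observation sequence. For the ``$\Leftarrow$'' direction I would conversely read a distributed algorithm off a winning profile $(g_1,g_2)$, and here the delicate point is that process~$2$ must recover the node $(\sphi,R)$ that the game hands it at each synchronization. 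This works because at a $\leftrightnet$-round process~$2$ learns the entire history, so---arguing by induction over synchronization points---it can simulate both strategies to recompute all past outputs and hence the automaton run up to that point; symmetrically, process~$1$ can compute the nodes it ``perfectly observes'' by simulating $\strat_2$ from the inputs it sees. Once this faithful back-and-forth simulation is established, compatible plays and executions are in bijection, the winning condition matches $L(\phi)$, and the two realizability statements coincide; decidability of $\Synthesis{\{\leftrightnet,\leftnet\}}$ then follows by feeding $\mathcal{G}_\phi$ to Fact~\ref{fact:vanMW}.
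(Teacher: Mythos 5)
Your construction coincides with the paper's: same nodes $V=S$, environment actions $\EActions=\Signals$, player actions $\Outp{1},\Outp{2}$, the identical observation functions (full information for player 1; $(\vecs,\inpsymb{x_1}{\leftrightnet}{x_2})$ at synchronizations and $\inpsymb{\bot}{\leftnet}{x_2}$ otherwise for player 2), transitions via $\delta$, and the same winning condition $(F_\phi\times 2^{S_\phi},F'_\phi\times 2^{S_\phi})$ lifted from $\Acc_\phi$ rather than the accumulator-based $\Acc$ of $\A$. Your correctness sketch (player 2 using the revealed state plus Lemma~\ref{lem::finmemory} in one direction, reconstruction of the automaton state at synchronization points in the other) is the argument the paper defers to its long version, so this is essentially the same proof.
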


\begin{proof}
By Remark~\ref{rem:leftrightnet},
input sequences that do not start with a symbol from $\Sigma_\leftrightnet$ are discarded.
Hence, we assume that those sequences are all trivially ``winning'', i.e.,
$(\Signals_{\leftnet} \times \Omega)(\Signals \times \Omega)^\omega \subseteq L(\phi)$.
Let $\A = (S,\iota,\delta,\Acc)$ be the
\DRWA according to
Lemma~\ref{lem::finmemory}. Recall that
$S = S_\phi \times 2^{S_\phi}$, where
$S_\phi$ is taken from $\A_\phi$,
and that the transition function is of the form $\delta: S \times (\Signals \times \Outputs) \to S$.

We construct the game
$\mathcal{G}_{\phi} = (V,v_0,\Winning,\EActions,(\Actions_\proc,\Obs_\proc,\obsf_\proc)_{
\proc \in \Procs},\GTrans)$ as follows. Obviously,
player $1$ corresponds to process $1$ and player $2$ to process $2$.
We simply set $V = S$ and $v_0 = \iota = (\iota_\phi,\emptyset)$,
and $\Winning$ contains, for all $(F_\phi,F'_\phi) \in \Acc_\phi$, the pair
$(F_\phi \times 2^{S_\phi},F_\phi' \times 2^{S_\phi})$.

Moreover, $\EActions = \Signals$,
the idea being that the environment chooses the inputs and the network graph. 
Accordingly, processes $1$ and $2$ choose their outputs so that $\Actions_1 = \Out_1$ and $\Actions_2 = \Out_2$.

Player 1's observations are $\Obs_1 = V \times \Signals$ and we set
$\obsf_1(s,\inpsymb{x_1}{\dummynet}{x_2}) = (s,\inpsymb{x_1}{\dummynet}{x_2})$.
Thus, player 1 has full information.
Player 2's observations are $\Obs_2 = (S \times \Signals_\leftrightnet) \cup \leftviews$
and we set
\begin{center}
$\obsf_2(\vecs,\inpsymb{x_1}{\dummynetw}{x_2}) =
  \begin{cases}
  (\vecs, \inpsymb{x_1}{\leftrightnet}{x_2}) & \textup{ if } {\dummynet} = {\leftrightnet}\\
  \inpsymb{\bot}{\leftnet}{x_2} & \textup{ if } {\dummynet} = {\leftnet}\,.
  \end{cases}$
\end{center}
That is, when the environment chooses a synchronizing input signal,
the current state of $\A$ is revealed to player 2, which corresponds to
passing the (abstracted) history to process 2.
Finally, the transitions are given by
$\GTrans(\vecs,\inpsymb{x_1}{\dummynetw}{x_2},(y_1,y_2)) =
  \delta\bigl(\vecs,(\inpsymb{x_1}{\dummynetw}{x_2},(y_1,y_2))\bigr)$.

The proof of correctness of the reduction is available in \cite{abs-2002-07545}.
\end{proof}

We have shown Theorem~\ref{thm:decidable-asymm} saying that
the problem $\Synthesis{\{ \leftrightnet, \leftnet \}}$ is decidable.

\subparagraph*{Complexity.}
The size of $\A_\phi$ is doubly exponential in the length of the formula.
It follows that the size of $\A$ is triply exponential, and so is
the size of $\mathcal{G}_\phi$.
Deciding the winner of our $(2,1)$-player game where one player has perfect information
can be done in exponential time so that the overall decision
procedure runs in 4-fold exponential time.

Note that $\Synthesis{\{\leftrightnet\}}$, which is equivalent to
centralized synthesis in presence of one single process, is 2EXPTIME-complete
\cite{pnueli1988framework}, from which we inherit the best known lower bound
for our problem. Moreover, hierarchical information further increases the
complexity: for static pipelines with variable number
of processes, the problem is no longer elementary \cite{pnueli1990distributed}.
However, it may be possible to improve our upper bound, which is left for future work.


As, in the proof, the given LTL formula is translated into
a \DRWA, synthesis is decidable
even when the specification is given
by any common finite automaton over $\omega$-words
(starting with a nondeterministic B{\"u}chi automaton,
we actually save one exponential wrt.\ LTL):

\begin{corollary}\label{cor:autsynth}
Over $\Nmodel = \{ \leftrightnet, \leftnet \}$, the following problem is decidable: 
Given an $\omega$-regular language $L \subseteq (\Signals \times \Outputs)^\omega$,
is $L$ realizable?
\end{corollary}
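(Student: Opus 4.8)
The plan is to observe that the decision procedure underlying Theorem~\ref{thm:decidable-asymm} never inspects the LTL formula $\varphi$ directly: the \emph{only} way in which $\varphi$ enters the construction is through a \DRWA $\A_\phi$ over $\Signals \times \Outputs$ with $L(\A_\phi) = L(\phi)$. Every subsequent object---the \DRWA $\A$ of Lemma~\ref{lem::finmemory} (obtained from $\A_\phi$ by tracking the set of states visited between consecutive $\leftrightnet$-positions), Claim~\ref{claim:bigsteps}, and the game $\mathcal{G}_\phi$ of Lemma~\ref{lem::gamereduction}---refers to $\A_\phi$ alone. Hence the whole pipeline is in fact a reduction that takes a \DRWA as input, and the membership of its language in the class of LTL-definable languages is irrelevant.

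Concretely, I would proceed as follows. First, given an $\omega$-regular $L \subseteq (\Signals \times \Outputs)^\omega$, invoke standard automata theory to build effectively a \DRWA $\A_L$ with $L(\A_L) = L$; if $L$ is presented by a nondeterministic B\"uchi automaton, Safra's determinization~\cite{Safra88} yields such an $\A_L$ with only a single exponential blow-up, which is where the promised saving of one exponential with respect to LTL comes from. Second, rerun the proof of Theorem~\ref{thm:decidable-asymm} verbatim with $\A_\phi$ replaced by $\A_L$ throughout. This produces a $(2,1)$-player game $\mathcal{G}_L$ in which player~$1$ has perfect information and such that $L$ is realizable iff $\mathcal{G}_L$ admits a winning strategy profile. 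Decidability then follows from Fact~\ref{fact:vanMW}.

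Two points require a small amount of care. The first is the technical normalization of Remark~\ref{rem:leftrightnet}: rather than passing from $\hat\phi$ to $\nextform{\hat\phi}$, I would prepend a free initial round at the level of languages, replacing $L$ by $(\Signals \times \Outputs)\cdot L$, which is again $\omega$-regular and preserves realizability by the same argument; this is precisely the language-level counterpart of the $\nextform{}$ trick. The second is to confirm that the equivalences in Lemma~\ref{lem::finmemory}, Lemma~\ref{lem::gamereduction}, and Claim~\ref{claim:bigsteps} are stated and proved purely in terms of $\A_\phi$ and the big-step structure of $\leftrightnet$-positions, so that substituting $\A_L$ leaves their proofs unchanged.

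The main (and really only) obstacle is bookkeeping rather than mathematics: one must check that nowhere in Sections~\ref{sec:onedirectional-model} and~\ref{sec:games} is a property of $\varphi$ used beyond ``$\A_\phi$ is a \DRWA recognizing $L(\phi)$''. Once this single interface is identified, the corollary is immediate, since for an arbitrary $\omega$-regular $L$ the automaton $\A_L$ plays exactly the role of $\A_\phi$, and realizability of $L$ reduces, as above, to a decidable game.
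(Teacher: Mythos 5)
Your proposal is correct and is essentially the paper's own argument: the paper proves the corollary precisely by observing that the procedure behind Theorem~\ref{thm:decidable-asymm} uses the specification only through the \DRWA $\A_\phi$, so one can start instead from a determinized (Safra) automaton for an arbitrary $\omega$-regular $L$, saving one exponential relative to the LTL translation. Your language-level normalization $(\Signals \times \Outputs)\cdot L$ is exactly the counterpart of the paper's replacement of $\hat{\phi}$ by $\nextform{\hat{\phi}}$ in Remark~\ref{rem:leftrightnet}.
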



\section{Reduction from $\{ \leftrightnet, \leftnet, \rightnet \}$ to
  $\{ \leftrightnet, \leftnet \}$}
\label{sec:reduction}

\newcommand{\dsignal}{\#}
\newcommand{\doutput}{\#}
\newcommand{\bij}{\beta}
\newcommand{\pword}{u}

\newcommand{\MOD}[1]{\textup{sim}_{#1}}
\newcommand{\simp}[1]{\textup{sim}_{#1}}
\newcommand{\simone}{\textup{sim}_1}
\newcommand{\simtwo}{\textup{sim}_2}

\newcommand{\figtranslation}{
\begin{figure}[h]
\centering
\small
{
$\begin{array}{c}
\\0\\1\\\\2\\3\\4\\5\\\\
\end{array}
\begin{array}{ccc}
w &~~~~& \trans{w}\\
\begin{array}{|ccc|}
\hline
\cellcolor{Gray}x_0 & \cellcolor{Gray}\leftnet & \cellcolor{Gray}x_0'\\
%
%
\cline{3-3}
\cellcolor{Gray}x_1 & \multicolumn{1}{c|}{\cellcolor{Gray}\rightnet} & x_1'\\
\cellcolor{Gray} & \multicolumn{1}{c|}{\cellcolor{Gray}} & \\
\cellcolor{Gray}x_2 & \multicolumn{1}{c|}{\cellcolor{Gray}\rightnet} & x_2'\\
\cline{1-2}
x_3 & \leftrightnet & x_3'\\
x_4 & \rightnet & x_4'\\
x_5 & \leftnet & x_5'\\
& & \\
%
\hline
\end{array}
& \rightsquigarrow &
\begin{array}{|ccc|}
\hline
\cellcolor{Gray}x_0 & \cellcolor{Gray}\leftnet & \cellcolor{Gray}x_0'\\
\cellcolor{Gray}\dsignal & \cellcolor{Gray}\leftrightnet & \cellcolor{Gray}\dsignal\\
\cline{1-1}
x_1' & \multicolumn{1}{|c}{\cellcolor{Gray}\leftnet} & \cellcolor{Gray}x_1\\
x_2' & \multicolumn{1}{|c}{\cellcolor{Gray}\leftnet} & \cellcolor{Gray}x_2\\
\cline{2-3}
x_3' & \leftrightnet & x_3\\
x_4' & \leftnet & x_4\\
\dsignal & \leftrightnet & \dsignal\\
x_5 & \leftnet & x_5'\\
%
\hline
\end{array}
\end{array}
\begin{array}{c}
\\0\\1\\2\\3\\4\\5\\6\\7
\end{array}
$
}
\caption{Illustration of $\trans{\cdot}: \Signals^\ast \to (\Signals')^\ast$\label{fig:transl}}
\end{figure}
}

In this section, we show decidability for the network
model $\Nmodel = \{\leftrightnet, \leftnet, \rightnet \}$, with
input alphabet $\Signals = \Inpp{1} \times \Nmodel \times \Inpp{2}$
and output alphabet $\Outputs = \Out_1 \times \Out_2$.
Recall that this also implies decidability for the network model
$\{\leftnet,\rightnet\}$. \figtranslation

The idea is to reduce the problem to the case of the network model
$\Nmodel' = \{\leftrightnet, \leftnet \}$ that we considered
in Sections~\ref{sec:onedirectional-model} and
\ref{sec:games}, choosing as input alphabet
$\Signals' = \nInpp{1} \times \Nmodel' \times \nInpp{2}$
where $\nInpp{1} = \nInpp{2} = (\Inpp{1} \cup \Inpp{2}) \uplus \{\dsignal\}$,
and as output alphabet $\Outputs' = \Out_1' \times \Out_2'$
where
$\Out_1' = \Out_2' = (\Out_1 \cup \Out_2) \uplus \{\doutput\}$.
To do so, we will rewrite the given specification $\phi \in \LTL{\Nmodel}$
towards an (automata-based) specification over $\Nmodel'$
in such a way that process 1 can always simulate the ``more informed''
process and process 2 simulates the other process.
Roughly speaking, what we are looking for is a translation
$\trans{\cdot}: \Signals^\ast \to (\Signals')^\ast$
of histories $w$ over $\Nmodel$ to histories $\trans{w}$ over $\Nmodel'$
such that
the view of process 1 in $\trans{w}$ is ``congruent'' to the view
of the more informed process in $w$, and
the view of process 2 in $\trans{w}$ is ``congruent'' to the view of the less informed
process in $w$.
Note that \cite{berwanger2018hierarchical} also
uses a simulation technique to cope with dynamically changing
hierarchies.

\begin{example}
Before defining $\trans{\cdot}$ formally, we illustrate it
in Figure~\ref{fig:transl} for a history $w$.
Round $0$ uses $\leftnet$ so that there is nothing to change.
Round $1$ employs $\rightnet$ so that process 1 henceforth simulates process 2
and vice versa. To make sure that the corresponding views in
$\trans{w}$ are still ``congruent'', we
insert the dummy signal $\inpsymb{\dsignal}{\leftrightnet}{\dsignal}$.
Actually, the gray-shaded view of process 1 in $w$ after round 2 contains the same
information as the gray-shaded view of process 2 in $\trans{w}$ after round $3$.
Though $w$ encounters $\leftrightnet$ in round $3$, we
decide not to change roles again; we will only do so when facing another $\leftnet$
(like in round $5$).
\exend
\end{example}

\newcommand\myoverset[2]{\overset{\textstyle #1}{#2}}

\newcommand{\inout}[5]{
\myoverset{\inpsymb{#1}{#2}{#3}}{\textcolor{red}{(#4,#5)}}
}

\newcommand{\translab}[2]{
\begin{array}{c|c}
#1 & #2
\end{array}
}

\tikzset{
        ->,  
        >=latex, 
        node distance=7cm, 
        initial text=$ $, 
        every loop/.style={looseness=16},
        }
\tikzstyle{every state}=[minimum size=22pt,inner sep=2pt]

\newcommand{\transf}{\delta_{\trans{\cdot}}}

Formally, $\trans{\cdot}: \Signals^\ast \to (\Signals')^\ast$ is given by the sequential
\emph{transducer} shown in Figure~\ref{fig:transldef}.
For the moment, we ignore the red part.
A transition with label $\alpha \mid \beta$ reads $\alpha$ and transforms
it into~$\beta$. As the transducer is deterministic,
it actually defines a function. When we include the red part, i.e.,
the symbols from $\Omega$ and $\Omega'$, we obtain an extension
to $\trans{\cdot}: (\Signals \times \Outputs)^\ast \to (\Signals' \times \Outputs')^\ast$.
Finally, these mappings are extended to
infinite words as expected.

\begin{figure}[h]
\centering
{\small
{
\begin{tikzpicture}
\centering
    \node[state, initial] (s0) {$1$};
    \node[state, right of=s0] (s1) {$2$};

    \draw (s0) edge[bend left=10] node[above]{
    $\translab{\inout{x_1}{\rightnet}{x_2}{y_1}{y_2}}{\inout{\dsignal}{\leftrightnet}{\dsignal}{\dsignal}{\dsignal} \inout{x_2}{\leftnet}{x_1}{y_2}{y_1}}$} (s1);
    \draw (s1) edge[bend left=10] node[below]{$\translab{\inout{x_1}{\leftnet}{x_2}{y_1}{y_2}}{\inout{\dsignal}{\leftrightnet}{\dsignal}{\dsignal}{\dsignal} \inout{x_1}{\leftnet}{x_2}{y_1}{y_2}}$} (s0);

    \draw (s0) edge[loop above] node{$\translab{\inout{x_1}{\leftrightnet}{x_2}{y_1}{y_2}}{\inout{x_1}{\leftrightnet}{x_2}{y_1}{y_2}}$} ();

    \draw (s0) edge[loop below] node{$\translab{\inout{x_1}{\leftnet}{x_2}{y_1}{y_2}}{\inout{x_1}{\leftnet}{x_2}{y_1}{y_2}}$} ();

\draw (s1) edge[loop above] node{$\translab{\inout{x_1}{\leftrightnet}{x_2}{y_1}{y_2}}{\inout{x_2}{\leftrightnet}{x_1}{y_2}{y_1}}$} ();

\draw (s1) edge[loop below] node{$\translab{\inout{x_1}{\rightnet}{x_2}{y_1}{y_2}}{\inout{x_2}{\leftnet}{x_1}{y_2}{y_1}}$} ();

\end{tikzpicture}}}
\caption{The mappings
$\trans{\cdot}: \Signals^\ast \to (\Signals')^\ast$ and
$\trans{\cdot}: (\Signals \times \Outputs)^\ast \to (\Signals' \times \Outputs')^\ast$
\label{fig:transldef}}
\end{figure}

Observe that the state of the transducer
reached after reading $w \in \Signals^\ast$ (or $w \in (\Signals \times \Omega)^\ast$)
reveals the process that process~1 is currently simulating.
We denote this process by 
$\simp{1}(w)$. Accordingly,
$\simp{2}(w) = 3-\simp{1}(w)$
is the process that process 2 simulates after input sequence $w$.
For the example word $w$ in Figure~\ref{fig:transl},
we get $\MOD{1}(w) = 1$ and $\MOD{2}(w) = 2$.

Note that, for all $w,w' \in \Signals^\ast$ and $p \in \{1,2\}$,
such that $\obssyn{p}{w} = \obssyn{p}{w'}$, we have $\simp{p}(w) = \simp{p}(w')$.
This is because the simulated process only depends on the sequence of
links.

Note that the mappings $\trans{\cdot}$ are all injective.
Indeed, at the first position that distinguishes $w$ and $w'$, the transducer
produces letters that distinguish $\trans{w}$ and $\trans{w'}$.
There is an analogous statement for views
(proved in \cite{abs-2002-07545}):

\begin{lemma}\label{lem:injective}
For all $w,w' \in \Signals^\ast$ and $p \in \{1,2\}$, the following hold:
\begin{itemize}
\item[(a)]  $\obssyn{p}{\trans{w}} = \obssyn{p}{\trans{w'}}  ~\Longrightarrow~ \obssyn{\simp{p}(w)}{w} = \obssyn{\simp{p}(w')}{w'}$

\item[(b)] $\obssyn{p}{w} = \obssyn{p}{w'}  ~\Longrightarrow~ \obssyn{\simp{p}(w)}{\trans{w}} = \obssyn{\simp{p}(w')}{\trans{w'}}$
\end{itemize}
\end{lemma}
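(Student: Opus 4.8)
The plan is to reduce both implications to a single \emph{bijective correspondence}: for each process $p$ and each history $w$, the views $\obssyn{p}{\trans{w}}$ and $\obssyn{\simp{p}(w)}{w}$ carry exactly the same information, in the sense that each can be recovered from the other by a decoding map depending only on the sequence of links occurring in $w$. First I would record three structural observations. (i) The map $p \mapsto \simp{p}(w)$ is an involution on $\{1,2\}$ (the identity when the transducer is in state~$1$, the transposition in state~$2$), and it depends only on the link sequence, since the transducer reacts only to links. (ii) Over $\Nmodel'=\{\leftrightnet,\leftnet\}$ process~$1$ always observes the whole translated history, i.e.\ $\obssyn{1}{\trans{w}}=\trans{w}$. (iii) In the original model the currently more-informed process is exactly $\simp{1}(w)$, and by the view semantics of $\leftnet$, $\rightnet$, $\leftrightnet$ (each synchronization toward a process hands it the full prefix) that process always has the full history as its view, i.e.\ $\obssyn{\simp{1}(w)}{w}=w$.

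Granting the correspondence, (a) and (b) both follow. The link sequence of $w$ is common knowledge, hence part of every view; moreover it is recoverable from $\obssyn{p}{\trans{w}}$ as well (process~$1$ sees all of $\trans{w}$, while process~$2$ sees every link together with the dummy positions, marked by $\dsignal$, which suffices to invert the deterministic transducer). Thus under either hypothesis the link sequences of $w$ and $w'$ coincide, so $\simp{p}(w)=\simp{p}(w')$ and the two decoding maps agree. Implication (a) is then exactly the statement that $\obssyn{p}{\trans{w}}$ \emph{determines} $\obssyn{\simp{p}(w)}{w}$ (the forward decoding is a function). For (b), set $q=\simp{p}(w)$; by the involution~(i) we have $p=\simp{q}(w)$, so $\obssyn{\simp{p}(w)}{\trans{w}}=\obssyn{q}{\trans{w}}$ is determined by $\obssyn{\simp{q}(w)}{w}=\obssyn{p}{w}$, which is precisely the backward direction of the correspondence.

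For $p=1$ the correspondence is immediate from (ii) and (iii): $\obssyn{1}{\trans{w}}=\trans{w}$ and $\obssyn{\simp{1}(w)}{w}=w$ determine each other because $\trans{\cdot}$ is injective. The substantive case is $p=2$, which I would prove by induction on $|w|$, maintaining the invariant that $\obssyn{2}{\trans{w}}$ and $\obssyn{\simp{2}(w)}{w}$ are related by a link-determined bijection. The inductive step is a case analysis on the current transducer state and the next link. In the non-switching cases the two views extend compatibly: at a $\leftrightnet$ both become the full (possibly component-swapped) history; at a $\leftnet$ in state~$1$ both processes gain only process~$2$'s input; and at a $\rightnet$ in state~$2$ process~$2$'s translated observation $\inpsymb{\bot}{\leftnet}{x_1}$ matches the own-input extension $\inpsymb{x_1}{\rightnet}{\bot}$ of the less-informed original process.

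I expect the switching transitions to be the main obstacle, since there process~$2$ in the translated world changes which original process it simulates (from $\simp{2}=2$ to $\simp{2}=1$, or vice versa). Here the transducer inserts the dummy $\inpsymb{\dsignal}{\leftrightnet}{\dsignal}$, which makes process~$2$ fully informed of the translated history; I must argue that this exactly mirrors the fact that, when the original link $\rightnet$ (resp.\ $\leftnet$) fires, the process that becomes newly less-informed was the more-informed one an instant before and hence, by (iii), already possessed the full history. Thus the dummy upgrade on the translated side and the full-history guarantee on the original side meet, and after appending the single genuine post-switch signal the invariant is restored. The bookkeeping that needs care is the positional misalignment (one original signal produces two translated signals at a switch) together with the systematic swapping of components in state~$2$; checking that the link-indexed bijection absorbs both is the crux of the argument.
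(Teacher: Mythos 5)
Your proof is correct and takes essentially the same route as the paper's: you establish, by induction along the transducer run, that $\obssyn{p}{\trans{w}}$ and $\obssyn{\simp{p}(w)}{w}$ determine each other via reconstruction maps readable from the (always visible) link sequence, resting on exactly the two facts the paper's argument uses --- that the simulated-informed process $\simp{1}(w)$ always holds the full original history, and that the dummy signal $\inpsymb{\dsignal}{\leftrightnet}{\dsignal}$ inserted at a switch hands process~2 the full translated history at precisely the moment the newly less-informed original process was still fully informed. Both implications (a) and (b) then follow by applying these reconstruction maps to the two sides of the hypothesis, with the involution $p \mapsto \simp{p}(w)$ and the link-determinedness of $\simp{p}$ handling (b) exactly as you describe.
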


\medskip

Moreover, the transducer can be applied to $\omega$-regular languages
in the following sense:

\begin{lemma}\label{lem:auttrans}
Given a \DRWA $\A$ over the alphabet $\Signals \times \Outputs$, there
is a \DRWA $\A'$ over $\Signals' \times \Outputs'$ of linear size such that
$L(\A') = \trans{L(\A)} \df \{\trans{w} \mid w \in L(\A)\}$.
\end{lemma}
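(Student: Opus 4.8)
The plan is to realize $\A'$ as the composition of the fixed two-state transducer of Figure~\ref{fig:transldef} with $\A$: while reading a word $w'$ over $\Signals' \times \Outputs'$, the automaton $\A'$ reconstructs on the fly the unique source letter of $\Signals \times \Outputs$ that the transducer would have emitted, feeds it to $\A$, and accepts exactly when the run of $\A$ on the decoded word is accepting. Since the transducer is deterministic and each of its transitions emits either one or two output letters, the only technical point is to buffer the two-letter blocks. These blocks are harmless because each of them begins with the distinguished dummy letter $(\inpsymb{\dsignal}{\leftrightnet}{\dsignal},(\dsignal,\dsignal))$, which cannot be confused with any genuine transduced letter (its fields carry $\dsignal \notin \Inpp{1} \cup \Inpp{2}$, resp.\ $\doutput$). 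Note also that the transducer never outputs a $\rightnet$-link letter, consistent with $\Signals'$ ranging only over links $\leftnet$ and $\leftrightnet$.

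Concretely, I would take as state set of $\A'$ the disjoint union of \emph{aligned} states $S \times \{1,2\}$, \emph{pending} states $\overline{S \times \{1,2\}}$, and one absorbing sink. The component $q \in \{1,2\}$ tracks the transducer state, i.e.\ which process is currently simulated, and an aligned state $(s,q)$ records that $\A$ is in state $s$ at a block boundary. Transitions follow the case analysis induced by Figure~\ref{fig:transldef}: on a single-letter block $\A'$ decodes the read letter to the corresponding $\sigma \in \Signals \times \Outputs$ (undoing the coordinate swap when $q=2$ and recovering the original link from the read one), moves to $(\delta(s,\sigma),q')$ and updates $q'$ as the transducer does; on the dummy letter it moves to $\overline{(s,q)}$ without advancing $\A$; and from $\overline{(s,q)}$, on the next letter, it decodes the whole two-letter block to its source $\sigma$ (a $\rightnet$-letter if $q=1$, a $\leftnet$-letter if $q=2$) and moves to $(\delta(s,\sigma),q')$. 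Any letter not fitting the expected shape — correct link, correct placement of the dummy, and field values in the appropriate sub-alphabets — sends $\A'$ to the sink. Determinism is immediate, since at every aligned state the link together with the dummy-test separates all admissible cases, so $\A'$ is a genuine \DRWA of size $4|S|+1$, hence linear.

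For acceptance I would lift $\Acc$ so that it only inspects the $S$-component of aligned states: for each $(F,F') \in \Acc$ put $G_F = F \times \{1,2\}$ and $G_{F'} = F' \times \{1,2\}$, and let pending states and the sink belong to no pair. The key observation is that on $\trans{w}$ the sequence of aligned states traversed by $\A'$ is exactly $(s_0,q_0)(s_1,q_1)\cdots$, where $s_0 s_1 \cdots$ is the run of $\A$ on $w$; since blocks are finite, at least one aligned state recurs, while pending states are transparent to $\Acc'$. Hence $\A'$ accepts $\trans{w}$ iff $\A$ accepts $w$. Conversely, if the run of $\A'$ on some $w'$ never reaches the sink, then $w'$ is successfully parsed into a valid sequence of blocks, which by determinism of the transducer forces $w' = \trans{w}$ for a source word $w$, unique by the injectivity of $\trans{\cdot}$ recorded just before the lemma; and if the run reaches the sink it stays there forever and is rejected, because the sink lies in no Rabin pair. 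Together these establish $L(\A') = \trans{L(\A)}$.

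The main obstacle I anticipate is not the construction itself but the bookkeeping around the two-letter blocks: one must make the transient pending states genuinely invisible to the Rabin condition, so that role-swaps occurring infinitely often neither create nor destroy acceptance, while simultaneously guaranteeing that the absorbing sink rejects every word outside the range of $\trans{\cdot}$. Getting both right hinges on the two facts that each block ends in an aligned state — so an aligned state always recurs along an infinite parse — and that the dummy marker makes the parse deterministic and unambiguous, which is precisely where injectivity of $\trans{\cdot}$ enters.
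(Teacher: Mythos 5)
Your construction is correct and is essentially the approach behind the paper's proof (deferred to the long version): you run the transducer of Figure~\ref{fig:transldef} ``in reverse'' as a product with $\A$, tracking the current transducer state, buffering the dummy-marked two-letter blocks with transient states invisible to the Rabin pairs, and sending ill-formed words to a rejecting sink, which yields the claimed linear size ($4|S|+1$ states, same number of pairs). One cosmetic remark: uniqueness of the decoded source word follows already from determinism of your parse; injectivity of $\trans{\cdot}$ is a consequence of, rather than an ingredient in, that argument.
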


Now, decidability for $\Nmodel$ is due to
Lemma~\ref{lem:auttrans} and
the following result,
whose proof crucially relies on injectivity of $\trans{\cdot}$ and Lemma~\ref{lem:injective}
(cf.\ \cite{abs-2002-07545}):

\begin{lemma}\label{lem:simulation}
Let $\varphi \in \LTL{\Nmodel}$.
The following statements are equivalent:
\begin{enumerate}
\item[(i)] There is a distributed algorithm $\profile$ (over $\Nmodel$) such that,
for all $w \in \Signals^\omega$, $\outcome{\profile}{w} \in L(\varphi)$.
\item[(ii)] There is a distributed algorithm $\profile'$ (over $\Nmodel'$) such that,
for all $w \in \Signals^\omega$, $\outcome{\profile'}{\trans{w}} \in \trans{L(\varphi)}$.
\end{enumerate}
\end{lemma}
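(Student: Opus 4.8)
The plan is to prove the two directions separately, using the transducer $\trans{\cdot}$ as a dictionary that relates distributed algorithms over $\Nmodel$ with those over $\Nmodel'$. The key conceptual point is that process~1 in the $\Nmodel'$-system always plays the role of the currently ``more informed'' process $\simp{1}(w)$, while process~2 plays the ``less informed'' one $\simp{2}(w)$. Lemma~\ref{lem:injective} is exactly the tool that lets me transfer strategies back and forth: part~(b) guarantees that if two $\Nmodel$-histories look the same to process~$p$, then the corresponding translated histories look the same to whichever process is currently simulating~$p$, and part~(a) gives the converse. This means a well-defined function on $\Nmodel$-views induces a well-defined function on the relevant $\Nmodel'$-views, and vice versa.

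For the direction (i)~$\Longrightarrow$~(ii), I would start from $\profile = (\strat_1,\strat_2)$ fulfilling $\varphi$ over $\Nmodel$ and define $\profile' = (\strat_1',\strat_2')$ over $\Nmodel'$ as follows. Given an $\Nmodel'$-view $v$ of player~$1$, I first check whether $v$ is of the form $\obssyn{1}{\trans{w}}$ for some $w$; on these ``legal'' views I let player~$1$ output, via the role-swap encoding built into the red part of the transducer, the output that $\strat_{\simp{1}(w)}$ prescribes on $\obssyn{\simp{1}(w)}{w}$. Lemma~\ref{lem:injective}(a) ensures this is independent of the chosen preimage $w$, so $\strat_1'$ is well-defined; $\strat_2'$ is defined symmetrically for player~$2$. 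On views outside the image of $\trans{\cdot}$ the output is irrelevant, since by Remark-style reasoning those histories are discarded (the dummy $\dsignal$-signals only ever arise from the transducer). Then I would verify by direct computation that $\outcome{\profile'}{\trans{w}} = \trans{\outcome{\profile}{w}}$ for all $w \in \Signals^\omega$: the transducer acts position-by-position, and the output components I plugged in are precisely the ones the transducer would map $\outcome{\profile}{w}$ to. Hence $\outcome{\profile}{w} \in L(\varphi)$ gives $\outcome{\profile'}{\trans{w}} \in \trans{L(\varphi)}$.

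For (ii)~$\Longrightarrow$~(i), I would reverse the construction: given $\profile'$, define $\profile$ by reading off, at each $\Nmodel$-view $\obssyn{p}{w}$, the output of the player $\simp{p}(w)$ in $\profile'$ evaluated on $\obssyn{\simp{p}(w)}{\trans{w}}$, undoing the role-swap encoding. Here Lemma~\ref{lem:injective}(b) supplies well-definedness. Again the identity $\outcome{\profile'}{\trans{w}} = \trans{\outcome{\profile}{w}}$ holds by construction, and since $\trans{\cdot}$ is injective and maps into $\trans{L(\varphi)}$ exactly the translates of words in $L(\varphi)$, I can conclude $\outcome{\profile}{w} \in L(\varphi)$ for every $w$.

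The main obstacle I anticipate is the bookkeeping around well-definedness and the role swap. The subtlety is that $\simp{p}(w)$ changes over the course of a history (precisely at $\rightnet$/$\leftnet$ transitions of the transducer), so I must be careful that a player's output for a fixed view is genuinely determined by the view alone and not by which underlying process it happens to be simulating at that point --- this is where the remark that $\simp{p}(w)$ depends only on the link sequence, together with Lemma~\ref{lem:injective}, does the real work. A second delicate point is the treatment of histories not in the image of $\trans{\cdot}$ (in particular those involving the fresh symbols $\dsignal$ and $\doutput$ in ways the transducer never produces); I would dispose of these by appealing to the convention that specifications over $\Nmodel'$ trivially accept such ``illegal'' executions, exactly as in the reduction's framing so that they impose no constraint on realizability.
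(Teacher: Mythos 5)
Your route is the same one the paper takes: push the algorithm through the transducer, use Lemma~\ref{lem:injective}(a)/(b) together with the fact that $\simp{p}(\cdot)$ depends only on the link sequence to get well-definedness of the transferred strategies on views, establish the identity $\outcome{\profile'}{\trans{w}} = \trans{\outcome{\profile}{w}}$, and use injectivity of $\trans{\cdot}$ to pull membership in $\trans{L(\varphi)}$ back to membership in $L(\varphi)$. Conceptually this matches the paper's proof exactly.

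There is, however, one concrete step that fails as written, in the direction (i)$\Rightarrow$(ii): your treatment of the inserted dummy positions. You define $\strat_1'$ only on views of the form $\obssyn{1}{\trans{w}}$ and declare all remaining outputs ``irrelevant''. But for $w \in \Signals^+$ the word $\trans{w}$ never ends with the dummy signal $\inpsymb{\dsignal}{\leftrightnet}{\dsignal}$ (every transducer transition emits a block ending in a non-dummy letter), so the views reached right after a dummy signal --- i.e., after those prefixes of $\trans{w}$ ending in $\inpsymb{\dsignal}{\leftrightnet}{\dsignal}$ --- are \emph{not} of the form $\obssyn{p}{\trans{w'}}$, and under your case analysis their outputs are left arbitrary. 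These positions occur along every translated input in which a role swap happens, and every word of $\trans{L(\varphi)}$ carries the output pair $(\doutput,\doutput)$ exactly there; hence with arbitrary outputs at these views the claimed ``direct computation'' identity $\outcome{\profile'}{\trans{w}} = \trans{\outcome{\profile}{w}}$ breaks, and $\outcome{\profile'}{\trans{w}} \in \trans{L(\varphi)}$ can fail. Your justification (``the dummy $\dsignal$-signals only ever arise from the transducer'') is exactly backwards: precisely because the transducer produces them, they appear in the inputs $\trans{w}$ over which statement (ii) quantifies, so the outputs there are constrained, not irrelevant. The repair is immediate --- the dummy signal travels over $\leftrightnet$ and is therefore visible to both players, so add the clause that a player outputs $\doutput$ whenever the last signal of its view is $\inpsymb{\dsignal}{\leftrightnet}{\dsignal}$. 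With that clause (and noting, for the converse direction, that condition (ii) itself already forces $\profile'$ to output $(\doutput,\doutput)$ at dummy positions and outputs of the correct type elsewhere, so the role-swap can indeed be undone), your argument goes through and coincides with the paper's.
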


In other words, an instance $\phi \in \LTL{\Nmodel}$ of the synthesis problem
can be reduced to the existence of a distributed
algorithm $f'$ over $\Nmodel'$, $\Signals'$, and $\Outputs'$ that fulfills
$L=M \cup \trans{L(\varphi)}$ where $M \subseteq (\Signals' \times \Omega')^\omega$ is the set of words
whose projection to $\Signals'$ is \emph{not} contained in $\trans{\Signals^\omega}$.
Using Lemma~\ref{lem:auttrans}, we obtain a \DRWA for $L$ (of doubly
exponential size) so that, by Corollary~\ref{cor:autsynth}, the problem is decidable.
Again, the overall procedure runs in 4-fold exponential time.

\smallskip

This concludes the proof of our main result, Theorem~\ref{thm:decidable}.

\section{Conclusion}
\label{sec:conclusion}

We showed that synthesis in a dynamic, synchronous two-node system
  is decidable for LTL specifications if and only if
  the network model does not contain the empty network.
Our model covers full-information protocols where nodes communicate their
  complete unbounded causal history.

Future work is concerned with establishing the precise complexity
of our problem and, possibly, improving the 4-fold exponential
upper bound.
Moreover, it would
be interesting to identify the subsets of
$\{\emptynet,\leftnet,\rightnet,\leftrightnet\}^\omega$ that
give rise to a decidable synthesis problem. For example, one may allow
boundedly many empty links in an input sequence.
Finally, we plan to extend our model to distributed systems of
arbitrary size. We conjecture that synthesis is solvable over a given network
model if and only if, in each communication graph,
any two nodes are connected via a directed path.
This would yield an analogue of the information-fork criterion
\cite{finkbeiner2005uniform}, which applies to static architectures.
It remains to be seen whether the ideas
presented in \cite{finkbeiner2005uniform} can be lifted
to dynamic architectures with causal memory.

\medskip

\noindent
\textit{Acknowledgments.} We thank Dietmar Berwanger for valuable feedback.
This work was partly supported by ANR FREDDA (ANR-17-CE40-0013).

\bibliographystyle{eptcs}
\bibliography{paper}

\end{document}